\documentclass[letterpaper, 10 pt, conference]{ieeeconf}
\IEEEoverridecommandlockouts %
\overrideIEEEmargins %

\usepackage{mypreamble}

\title{\bf \LARGE Computing the Hard Scaled Relative Graph of LTI Systems}
\author{Julius P.~J. Krebbekx$^1$, Eder Baron-Prada$^{2,3}$, Roland Tóth$^{1,4}$, Amritam Das$^1$
\thanks{$^1$Control Systems group, Department of Electrical Engineering,  Eindhoven University of Technology, The Netherlands.}
\thanks{$^2$Austrian Institute of Technology, 1210 Vienna,
Austria.}
\thanks{$^3$Automatic Control Laboratory, ETH Zurich, 8092 Z\"urich, Switzerland.}
\thanks{$^4$Systems and Control Lab, HUN-REN Institute for Computer Science and Control, Budapest, Hungary. 
E-mail: {\tt\small \{j.p.j.krebbekx, r.toth, am.das\}@tue.nl, ebaron@ethz.ch}}
}

\date{\today}

\begin{document}

\maketitle

\begin{abstract}
    \emph{Scaled Relative Graphs} (SRGs) provide a novel graphical frequency-domain method for the analysis of nonlinear systems, where \emph{Linear Time-Invariant} (LTI) systems are the fundamental building block. To analyze feedback loops with unstable LTI components, the \emph{hard} SRG is required, since it aptly captures the input/output behavior on the extended $L_2$ space. In this paper, we develop a systematic computational method to \emph{exactly} compute the hard SRG of LTI systems, which may be unstable and contain integrators. We also study its connection to the Nyquist criterion, including the multivariable case, and demonstrate our method on several examples.
\end{abstract}

\section{Introduction}

For \emph{Linear Time-Invariant} (LTI) systems, graphical system analysis using the Nyquist diagram~\cite{nyquistRegenerationTheory1932} is the cornerstone of control engineering. It is easy to use, allows for intuitive analysis and controller design methods and has been widely used in the industry. However, it is unclear how to systematically generalize graphical frequency domain methods to nonlinear system analysis and controller design.

The \emph{Scaled Relative Graph} (SRG)~\cite{ryuScaledRelativeGraphs2022} was recently proposed as a novel graphical framework for analyzing nonlinear \emph{Single-Input Single-Output} (SISO) systems~\cite{chaffeyGraphicalNonlinearSystem2023}. The SRG offers a non-approximative method yielding sufficient conditions for stability and upper bounds on the (incremental) $L_2$-gain, a key performance metric in practice. It is modular, allowing interconnections to be analyzed by composing the SRGs of subsystems. The SRG recovers classical results such as the small-gain theorem~\cite{chaffeyGraphicalNonlinearSystem2023} and generalizes the circle criterion~\cite{krebbekxScaledRelativeGraph2025}, due to its close relation to the Nyquist diagram. It also enables frequency-dependent gain bounds, forming the basis of a nonlinear Bode diagram and bandwidth definition~\cite{krebbekxNonlinearBandwidthBode2025}, accommodates for the LTI notions of phase lead/lag~\cite{eijndenPhaseScaledGraphs2025}, and can be used to analyze nonlinear \emph{Multi-Input Multi-Output} (MIMO) systems~\cite{krebbekxGraphicalAnalysisNonlinear2025}. Apart from theoretical developments, SRGs have proven effective in applications such as reset control analysis~\cite{grootDissipativityFrameworkConstructing2025} and design~\cite{krebbekxResetControllerAnalysis2025}, and the analysis of decentralized systems~\cite{baron-pradaDecentralizedStabilityConditions2025}.

The central building block in SRG analysis of nonlinear systems is the SRG of LTI systems. For stable SISO systems, the SRG was computed in~\cite{chaffeyGraphicalNonlinearSystem2023}, and it was shown in~\cite{krebbekxScaledRelativeGraph2025a} that this description was problematic for unstable LTI systems in the loop. The resolution provided in~\cite{krebbekxScaledRelativeGraph2025a} was to add the Nyquist criterion to the SRG, obtaining the so-called \emph{extended} SRG, solving the problem of how to deal with unstable systems in the loop for the SISO case. 

The \emph{hard} SRG was defined in~\cite{chenSoftHardScaled2025}, where the original SRG from~\cite{chaffeyGraphicalNonlinearSystem2023} was called the \emph{soft} SRG. As argued in~\cite{chenSoftHardScaled2025}, the hard SRG can be computed for unstable systems, hence provides an SRG framework that can deal with unstable LTI systems in the loop. An initial attempt to bound the hard SRG of stable square LTI systems is made in~\cite{grootDissipativityFrameworkConstructing2025}. However, a systematic method to compute the hard SRG of LTI systems, both stable and \emph{unstable} (including integrators), is lacking. Since the benefit of hard SRGs is mostly to deal with unstable systems in a feedback loop, a method to compute the hard SRG of unstable LTI systems would present a significant advancement for SRG analysis of nonlinear systems.

In this paper, we develop a non-approximative method to compute the hard SRG of LTI systems, which may be unstable. Our method is based on the transfer function representation of LTI systems, and constructs a bound from intersecting and removing disks in the complex plane centered on the real axis, and we prove that this bound is exact. This type of bound has been used in \cite{grootDissipativityFrameworkConstructing2025} for bounding the soft and hard SRG of stable square LTI and reset systems, and in \cite{krebbekxGraphicalAnalysisNonlinear2025} to bound the soft SRG of stable LTI operators which need not be square. We also prove that the extended SRG~\cite{krebbekxScaledRelativeGraph2025a}, which combines the SISO Nyquist criterion and the soft SRG, is \emph{identical} to the hard SRG. Having the hard SRG of LTI systems available, we show how they can serve as an alternative to the MIMO Nyquist criterion~\cite{macfarlaneGeneralizedNyquistStability1977} and the Generalized Nyquist Criterion~\cite{desoerGeneralizedNyquistStability1980} (GNC).

This paper is structured as follows. We present the mathematical preliminaries, such as the various SRG definitions, in Section~\ref{sec:preliminaries}. In Section~\ref{sec:hard_srg_lti} we present our main result, the algorithm for computing the hard SRG of LTI systems. Section~\ref{sec:connection_nyquist_criterion} discusses the connection to the Nyquist criterion, both in the SISO and MIMO case. We demonstrate our method on several examples in Section~\ref{sec:examples}, and present our conclusions in Section~\ref{sec:conclusion}.

\section{Preliminaries}\label{sec:preliminaries}

\subsection{Notation and Conventions}

Let $\R, \C$ denote the real and complex number fields, respectively, with $\R_{\geq 0} = [0, \infty)$. Let $\C_\infty := \C \cup \{ \infty \}$ denote the extended complex plane. A closed disk in $\C_\infty$ centered at $x$ with radius $r$ is defined as $D_r(x) = \{ z \in \C \mid |z-x| \leq r \}$, where the unit disk is $\mathbb{D} = D_1(0)$, and we define $D_\infty(x) = \C_\infty$. We denote the field of rational functions in $s \in \C$ with real coefficients as $\R(s)$~\cite{desoerFeedbackSystemsInputoutput1975}. For $G(s) \in \R^{q \times p}(s)$, we define $\overline{\sigma}(G) := \sup_{\omega \in \R} \overline{\sigma}(G(j \omega))$ and $\underline{\sigma}(G) := \inf_{\omega \in \R} \underline{\sigma}(G(j \omega))$, where $\overline{\sigma}(A)$ ($\underline{\sigma}(A)$) denotes the largest (smallest) singular value of the matrix $A \in \C^{q \times p}$. The distance between two sets $\mathcal{C}_1,\mathcal{C}_2 \subseteq \C_\infty$ is defined as $\dist(\mathcal{C}_1,\mathcal{C}_2) := \inf_{z_1 \in \mathcal{C}_1, z_2 \in \mathcal{C}_2} |z_1-z_2|$, where $|\infty-\infty|:=0$. Inversion $\mathcal{C}^{-1}$ of some $\mathcal{C} \in \C_\infty$ is defined as the M\"obius inverse $r e^{j \phi} \mapsto (1/r)e^{j \phi}$. The complement of a set $\mathcal{C}$ in its ambient space (e.g. $\R,\C$ or $\C_\infty$) is denoted as $\mathcal{C}^c$. The set difference between $\mathcal{C}_1,\mathcal{C}_2$ is defined as $\mathcal{C}_1 \setminus \mathcal{C}_2 =\{z \in \mathcal{C}_1 \mid z \notin \mathcal{C}_2 \}$.

\subsection{Signals and Systems}

\subsubsection{Signal spaces}

Let $\mathcal{L}$ denote a Hilbert space, equipped with an inner product $\inner{\cdot}{\cdot}_\mathcal{L} : \mathcal{L} \times \mathcal{L} \to \C$ and norm $\norm{x}_\mathcal{L} := \sqrt{\inner{x}{x}_\mathcal{L}}$. For $d \in \N^+$, $\mathbb{T} \in \{ [0,T], \R_{\geq 0} \mid T>0 \}$, the Hilbert spaces of interest are
\begin{equation}\label{eq:L2d-space}
    L_2^d(\mathbb{T}) := \{ f: \mathbb{T} \to \R^d \mid \norm{f} <\infty \},
\end{equation}
with inner product $\inner{f}{g}:= \int_\mathbb{T} f(t)g(t) d t$ inducing the norm $\norm{f}$. For $\R^d$, the inner product is $xy = \sum^d_{i=1} x_i y_i$. We will abbreviate $L_2^d(\R_{\geq 0}) =: L_2^d$ and $L_2^d([0,T]) =: L_2^d[0,T]$.

Elements of $f \in L_2^d(\mathbb{T})$ are denoted as $f=(f_1,\dots,f_d)^\top$, where $f_i \in L_2(\mathbb{T})$ for all $i=1,\dots,d$. Furthermore, $0 \in L_2^d(\mathbb{T})$ refers to the map $\mathbb{T} \ni t \mapsto 0 \in \R^d$. 

For any $T \in \R_{\geq 0}$, define the truncation operator $P_T : L_2(\mathbb{F}) \to L_2(\mathbb{F})$ as $(P_T u)(t) = 0$ for all $t>T$, else $(P_T u)(t) = u(t)$,
which we abbreviate as $u_T :=P_T u$. The extension of $L_2^d$, see Ref.~\cite{desoerFeedbackSystemsInputoutput1975}, is defined as 
\begin{equation*}
    \Lte^d := \{ u : \R_{\geq 0} \to \R^d \mid \norm{P_T u}_2 < \infty \text{ for all } T \in \R_{\geq 0} \},
\end{equation*}
The extended space is the natural setting for modeling systems, as it includes periodic and diverging signals, which are otherwise excluded from $L_2$.

For $n \leq d$, we define the linear subspaces (which are Banach spaces)
\begin{subequations}
\begin{align}
    \mathcal{U}^d_n &:= \{ f \in L_2^d \mid f_i = 0 \text{ for } i>n \}, \label{eq:subspace-Un-zeros} \\
    \mathcal{U}^d_{n, T} &:= \{ f \in L_2^d[0,T] \mid f_i = 0 \text{ for } i>n \}. \label{eq:subspace-UnT-zeros}
\end{align}
\end{subequations}
In the above definitions, the superscript $d$ is dropped if it is clear from the context.

\subsubsection{Systems}

Systems are modeled as operators $R: \Lte^p \to \Lte^q$, which means that $R$ maps \emph{each} element of $\Lte^p$ into $\Lte^q$. Such an operator is said to be causal (on $\Lte^p$) if it satisfies $P_T (R u) = P_T(R(P_Tu))$ for all $u \in \Lte^p$ and $T \in \R_{\geq 0}$~\cite{desoerFeedbackSystemsInputoutput1975}. Unless otherwise specified, \emph{we will always assume causality.} A causal operator $R: \Lte^p \to \Lte^q$ can be truncated to $[0,T]$ by defining $R|_T : L_2^p[0,T] \to L_2^q[0,T]$ as $R|_T u_T := (R u_T)_T$ for any $T>0$ and identifying a $u_T \in \Lte^p$ uniquely with a $u_T \in L_2^p[0,T]$. 

For an operator $R: X \to Y$, where $(X,\norm{\cdot}_X)$ and $(Y,\norm{\cdot}_Y)$ are normed spaces, we define the induced incremental norm as (similar to the notation in~\cite{vanderschaftL2GainPassivityTechniques2017})
\begin{equation}\label{eq:incremental_induced_norm}
    \Gamma(R):= \sup_{x_1,x_2 \in X} \frac{\norm{R x_1-R x_2}_Y}{\norm{x_1-x_2}_X} \in [0,\infty].
\end{equation}

\subsection{LTI Systems}

We consider LTI systems $G : \Lte^p \to \Lte^q$ that can be represented by a finite-dimensional state-space representation $(A,B,C,D)$, where $A \in \R^{n_\mathrm{x} \times n_\mathrm{x}}, B \in \R^{n_\mathrm{x} \times p}, C \in \R^{q \times n_\mathrm{x}}, D \in \R^{q \times p}$, or equivalently by a real rational transfer function $G(s) = C (sI - A)^{-1} B+D \in \R^{q \times p}(s)$~\cite{hespanhaLinearSystemsTheory2009}. We denote the space of proper and stable real rational transfer function as $RH_\infty^{q \times p}$~\cite{zhouRobustOptimalControl1996}. For $R \in RH_\infty^{q \times p}$, then it is evident from Parseval's identity that $\Gamma(R) = \overline{\sigma}(R) = \norm{R}_\infty$~\cite{desoerFeedbackSystemsInputoutput1975}. 

\subsubsection{Transfer functions, poles and zeros}

A transfer function $G(s) \in \R^{q \times p}(s)$ is proper if $\lim_{s \to \infty} G(s) = D$ is bounded, otherwise $G(s)$ is called improper. If $D$ is non-singular, then $G(s)$ is called bi-proper. We call a pole $p_i$ stable if it lies in the left-half plane (LHP), i.e., $\mathrm{Re}(p_i) <0$, and $G(s)$ is stable if it is proper and has only stable poles. Similarly, a pole $p_i$ is unstable if it lies in the closed right half-plane (RHP), i.e., $\mathrm{Re}(p_i) \geq 0$, and if $G(s)$ has at least one unstable pole, it is called unstable. 

When all transmission zeros of $G(s)$, i.e., those $z_i \in \C$ such that $G(z_i)$ loses rank, lie in the open LHP ($\mathrm{Re}(z_i) <0$), then $G(s)$ is called \emph{minimum phase}.

\subsubsection{Inverting LTI systems}

A transfer function $G(s) \in \R^{q \times p}(s)$ is said to have full column normal rank if $G(s)$ has rank $p$ in $\R^{q \times p}(s)$. We note that the normal rank, i.e., the rank of $G(s)$ in $\R^{q \times p}(s)$, is equal to the number of nonzero entries in the Smith-McMillan form~\cite{hespanhaLinearSystemsTheory2009}.

An LTI system $G : \Lte^p \to \Lte^q$ can be inverted if and only if $p=q$, $G(s)$ is bi-proper and has full normal rank~\cite{hespanhaLinearSystemsTheory2009}. In that case, the poles of $G(s)$ become the zeros of $G^{-1}(s)$, and vice versa.

\subsection{Scaled Relative Graphs}

Since the introduction of the SRG, several extensions/adaptations have been proposed. We refer to the original SRG as the \emph{soft SRG}~\cite{ryuScaledRelativeGraphs2022}, since it considers signals in $L_2$. The soft SRG was extended in~\cite{krebbekxScaledRelativeGraph2025a,krebbekxScaledRelativeGraph2025} by adding the encirclement information from the Nyquist criterion, resulting in the \emph{extended SRG}. The \emph{hard} SRG, introduced in~\cite{chenSoftHardScaled2025}, considers trajectories in $L_2[0,T]$ for all $T>0$, hence inherts its name from the hard IQC~\cite{megretskiSystemAnalysisIntegral1997}.  

\subsubsection{Hyperbolic convexity} 

A set $\mathcal{C} \subset \C$ is hyperbolic convex (h-convex) if $\mathcal{C} = g(\mathrm{co}(f(\mathcal{C}))) =: \hco(\mathcal{C})$, where $f(z) = \frac{(\bar{z}-j)(z-j)}{1+|z|^2}$, $g(z) = \{\frac{\mathrm{Im}(z) \pm \sqrt{1-|z|^2}}{\mathrm{Re}(z)-1} \}$, and $\mathrm{co}$ ($\hco$) denotes the (hyperbolic) convex hull. It was shown in~\cite{patesScaledRelativeGraph2021} that the closure of every h-convex set $\mathcal{C}$ obeys
\begin{equation}\label{eq:h-convex-set-disk-representation}
    \mathrm{cl} \mathcal{C} = \bigcap_{\alpha \in \R} D_{R_\alpha}(\alpha) \setminus D_{r_\alpha}(\alpha)  =: \mathcal{G}(\{r_\alpha\}, \{R_\alpha\}),
\end{equation}
for some choice of radii $0\leq r_\alpha \leq R_\alpha \leq \infty$ for all $\alpha \in \R$. 

SRG-based stability analysis is based on the \emph{distance} between SRGs. Since taking the closure of sets does not affect their distance in $\C_\infty$, it suffices to identify an SRG with its closure.

\subsubsection{Soft SRGs}

Let $\mathcal{L}$ be a Hilbert space, and $R : \dom(R) \subseteq \mathcal{L} \to \mathcal{L}$ a relation. Define the angle between $u, y\in \mathcal{L}$ as $\angle(u, y) := \cos^{-1} \frac{\mathrm{Re} \inner{u}{y}}{\norm{u} \norm{y}} \in [0, \pi]$. Given distinct signals $u_1, u_2 \in \mathcal{U} \subseteq \dom(R)$, we define the set
\begin{multline*}
    z_R(u_1, u_2) := \\ \left\{ \frac{\norm{y_1-y_2}}{\norm{u_1-u_2}} e^{\pm j \angle(u_1-u_2, y_1-y_2)} \mid y_1 \in R u_1, y_2 \in R u_2 \right\} \\ \cup \{\infty \mid \text{if } R \text{ is multi-valued }\}.
\end{multline*}
The (soft) SRG of $R$ over the set $\mathcal{U}$ is defined as
\begin{equation}\label{eq:soft_srg_def}
    \SRG_\mathcal{U} (R) := \bigcup_{u_1, u_2 \in \mathcal{U}, \; u_1 \neq u_2} z_R(u_1, u_2) \subseteq \C_\infty,
\end{equation}
and we denote $\SRG(R) := \SRG_{\dom(R)}(R)$. Note that $0 \in \SRG_\mathcal{U}(R)$, if and only if there exist $u_1, u_2 \in \mathcal{U}$, $u_1 \neq u_2$, such that $R u_1 = R u_2$. We refer the reader to~\cite{krebbekxGraphicalAnalysisNonlinear2025} for the definition of the SRG for non-square MIMO operators $R : \Lte^p \to \Lte^q$, i.e., where $p \neq q$.

\subsubsection{Soft SRG of LTI systems}

In~\cite{patesScaledRelativeGraph2021} it was shown that $\SRG(T)$ is h-convex for any linear operator $T$ on a Hilbert space. For stable SISO LTI systems, i.e., $G \in RH_\infty$, it was shown that the soft SRG~\eqref{eq:soft_srg_def} is equal to the hyperbolic convex hull of the Nyquist diagram~\cite{patesScaledRelativeGraph2021,chaffeyGraphicalNonlinearSystem2023}, i.e.
\begin{equation}\label{eq:soft_srg_lti_siso}
    \SRG(G) = \hco(\{G(j \omega) \mid \omega \in \R \}) = \mathcal{G}(\{r_\alpha\}, \{R_\alpha\}).
\end{equation}
In~\cite{krebbekxGraphicalAnalysisNonlinear2025}, it was shown that $r_\alpha = \inf_{\omega \in \R} |G(s) - \alpha|$ and $R_\alpha = \sup_{\omega \in \R} |G(s) - \alpha|$ are the largest values of $r_\alpha$, and smallest values of $R_\alpha$, such that $\SRG(G) \subseteq \mathcal{G}(\{r_\alpha\}, \{R_\alpha\})$, and therefore~\eqref{eq:soft_srg_lti_siso} holds for these $r_\alpha, R_\alpha$ values.

If $G$ is not stable,~\eqref{eq:soft_srg_lti_siso} still holds, with the same values of $r_\alpha$ and $R_\alpha$, since stability of $G$ does not affect the hyperbolic convex hull computation in~\eqref{eq:soft_srg_lti_siso}. However, now $\SRG(G) = \SRG_\mathcal{U}(G)$, where $\mathcal{U} \subseteq L_2$ is the set of signals such that $G :\mathcal{U} \to L_2$~\cite{takedaInstabilityFeedbackSystems1973,krebbekxScaledRelativeGraph2025}. 

Analogously to~\cite{krebbekxGraphicalAnalysisNonlinear2025}, we embed MIMO operators $G$ in a space of square operators of size $n = \max\{p,q\}$ and define
\begin{equation}\label{eq:G_alpha}
    G_\alpha = \mat{G \\ 0_{(n-q) \times p}} - \mat{\alpha I \\ 0_{(n-p) \times p}},
\end{equation}
where $\alpha \in \R$. In this case, one has the bound
\begin{equation}\label{eq:soft_srg_lti}
    \SRG(G) \subseteq \mathcal{G}(\{\underline{\sigma}(G_\alpha)\}, \{\overline{\sigma}(G_\alpha)\}),
\end{equation}
as shown in~\cite{krebbekxGraphicalAnalysisNonlinear2025}. The bound in~\eqref{eq:soft_srg_lti} becomes an equality if $G(s)$ is a normal matrix for all $s \in j \R$~\cite{krebbekxGraphicalAnalysisNonlinear2025,grootDissipativityFrameworkConstructing2025}. In fact, we can prove the latter for all stable and square systems.

\begin{proposition}\label{prop:stable_square_lti_soft_srg}
    Let $G \in RH_\infty^{p \times p}$, then $\SRG(G) = \mathcal{G}(\{\underline{\sigma}(G_\alpha)\}, \{\overline{\sigma}(G_\alpha)\})$.
\end{proposition}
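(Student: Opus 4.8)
The plan is to exploit the already-established inclusion~\eqref{eq:soft_srg_lti} together with a direct computation of the \emph{extremal} disk radii of $\SRG(G)$, and then invoke h-convexity to upgrade the inclusion to an equality. Since $G$ is square, $n=p$ and $G_\alpha = G - \alpha I$ in~\eqref{eq:G_alpha}. The cornerstone is an exact identity: because $G$ is linear, every point of $\SRG(G)$ arises from a single nonzero input $u = u_1 - u_2 \in L_2^p$ as $z = \tfrac{\norm{Gu}}{\norm{u}} e^{\pm j \angle(u, Gu)}$, and a short computation using $\mathrm{Re}(z) = \tfrac{\inner{u}{Gu}}{\norm{u}^2}$ and $|z|^2 = \tfrac{\norm{Gu}^2}{\norm{u}^2}$ gives, for every real $\alpha$,
\begin{equation*}
    |z - \alpha|^2 = \frac{\norm{Gu}^2 - 2\alpha \inner{u}{Gu} + \alpha^2 \norm{u}^2}{\norm{u}^2} = \frac{\norm{(G - \alpha I)u}^2}{\norm{u}^2}.
\end{equation*}
Thus the distance from any SRG point to the real center $\alpha$ equals the gain $\norm{G_\alpha u}/\norm{u}$ of the shifted operator on the generating input; the $\pm$ in the phase is irrelevant, since $|z-\alpha|$ depends only on $\mathrm{Re}(z)$ and $|z|$.

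Taking suprema and infima over $u \neq 0$ then identifies the two extremal radii. For the outer radius, $\sup_{z \in \SRG(G)} |z-\alpha| = \sup_{u \neq 0} \norm{G_\alpha u}/\norm{u} = \Gamma(G_\alpha) = \overline{\sigma}(G_\alpha)$, where the last two equalities use that $G_\alpha \in RH_\infty^{p \times p}$ and the identity $\Gamma(R) = \overline{\sigma}(R)$ recalled for stable $R$ in Section~\ref{sec:preliminaries}. For the inner radius, I would show $\inf_{u \neq 0} \norm{G_\alpha u}/\norm{u} = \underline{\sigma}(G_\alpha)$ by a Parseval argument: the lower bound follows pointwise in frequency from $\norm{G_\alpha(j\omega)\hat{u}(j\omega)} \geq \underline{\sigma}(G_\alpha(j\omega)) \norm{\hat{u}(j\omega)} \geq \underline{\sigma}(G_\alpha) \norm{\hat{u}(j\omega)}$ followed by integration, while the reverse bound is obtained by concentrating $\hat{u}$ in a narrow band around a frequency $\omega^\star$ (nearly) attaining $\inf_\omega \underline{\sigma}(G_\alpha(j\omega))$ and aligning it with the corresponding minimal right singular vector, using the conjugate-symmetric pairing at $\pm\omega^\star$ to keep $u$ real-valued. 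Hence $\inf_{z\in\SRG(G)}|z-\alpha| = \underline{\sigma}(G_\alpha)$.

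Finally, I would close the argument exactly as in the SISO case below~\eqref{eq:soft_srg_lti_siso}. By~\cite{patesScaledRelativeGraph2021} the set $\SRG(G)$ is h-convex, so by~\eqref{eq:h-convex-set-disk-representation} its closure admits a representation $\mathcal{G}(\{r_\alpha\}, \{R_\alpha\})$ for some radii. The previous step shows that $r_\alpha = \underline{\sigma}(G_\alpha)$ is the largest inner radius and $R_\alpha = \overline{\sigma}(G_\alpha)$ the smallest outer radius consistent with the inclusion~\eqref{eq:soft_srg_lti}; since these extremal radii are admissible, the h-convex representation holds with them, yielding $\SRG(G) = \mathcal{G}(\{\underline{\sigma}(G_\alpha)\}, \{\overline{\sigma}(G_\alpha)\})$.

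I expect the main obstacle to be the inner-radius computation, i.e. establishing that the minimum gain $\inf_{u \neq 0} \norm{G_\alpha u}/\norm{u}$ coincides with $\underline{\sigma}(G_\alpha) = \inf_\omega \underline{\sigma}(G_\alpha(j\omega))$: unlike the maximum gain, which is the standard $H_\infty$/induced-norm identity, the minimum gain is not a norm and its attainment requires the careful frequency-localization argument above, with attention to keeping the approximating inputs real and to the degenerate case $\underline{\sigma}(G_\alpha) = 0$ (imaginary-axis transmission zeros), where the inner disk collapses to the center $\alpha$.
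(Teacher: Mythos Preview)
Your proof is correct and follows essentially the same route as the paper: both invoke h-convexity of $\SRG(G)$ from~\cite{patesScaledRelativeGraph2021} to obtain the disk representation~\eqref{eq:h-convex-set-disk-representation}, and both identify the extremal radii as $\underline{\sigma}(G_\alpha)$ and $\overline{\sigma}(G_\alpha)$ via Parseval. Your version is more explicit---in particular the identity $|z-\alpha|^2 = \norm{G_\alpha u}^2/\norm{u}^2$ and the frequency-localization argument for the minimum gain---whereas the paper compresses these steps into a single appeal to Parseval's theorem.
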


\begin{proof}
    See the appendix.
\end{proof}

\subsubsection{The extended SRG}

To extend SRG analysis beyond stable LTI systems, an \emph{extended} SRG for SISO LTI systems was defined in~\cite{krebbekxScaledRelativeGraph2025} by adding the Nyquist criterion information to the soft SRG. Consider the LTI feedback system in Fig.~\ref{fig:linear_feedback}, where $L(s)$ is a transfer function. The Nyquist criterion is given by the following theorem.

\begin{figure}[tb]
    \centering

    \tikzstyle{block} = [draw, rectangle, 
    minimum height=2em, minimum width=2em]
    \tikzstyle{sum} = [draw, circle, node distance={0.5cm and 0.5cm}]
    \tikzstyle{input} = [coordinate]
    \tikzstyle{output} = [coordinate]
    \tikzstyle{pinstyle} = [pin edge={to-,thin,black}]
    
    \begin{tikzpicture}[auto, node distance = {0.3cm and 0.5cm}]
        \node [input, name=input] {};
        \node [sum, right = of input] (sum) {$ $};
        \node [block, right = of sum] (lti) {$L(s)$};
        \node [coordinate, right = of lti] (z_intersection) {};
        \node [output, right = of z_intersection] (output) {}; %
        \node [coordinate, below = of lti] (static_nl) {};
    
        \draw [->] (input) -- node {$r$} (sum);
        \draw [->] (sum) -- node {$e$} (lti);
        \draw [->] (lti) -- node [name=z] {$y$} (output);
        \draw [-] (z) |- (static_nl);
        \draw [->] (static_nl) -| node[pos=0.99] {$-$} (sum);
    \end{tikzpicture}
    
    \caption{Feedback interconnection with LTI loop transfer $L(s)$.}
    \label{fig:linear_feedback}
\end{figure}
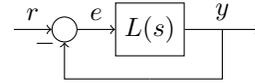

\begin{theorem}[Ch. 4.5~\cite{desoerFeedbackSystemsInputoutput1975}]\label{thm:nyquist}
    Let $n_\mathrm{z}$ denote the number of unstable closed-loop poles, and $n_\mathrm{p}$ the number of unstable poles of $L(s)$\footnote{Poles of $L(s)$ on the imaginary axis do not contribute to $n_\mathrm{p}$. The D-contour is chosen such that poles on the imaginary axis are kept to the left of the D-contour.}. Additionally, denote $n_\mathrm{n}$ the amount of times that $L(j \omega)$ encircles the point $-1$ in clockwise fashion as $\omega $ traverses the $D$-contour, going from $-jR$ to $jR$ and then along $Re^{j \phi}$ as $\phi$ goes from $\pi/2 \to -\pi/2$, for $R \to \infty$. The closed-loop system in Fig.~\ref{fig:linear_feedback} satisfies $n_\mathrm{z}=n_\mathrm{n}+n_\mathrm{p}$.
\end{theorem}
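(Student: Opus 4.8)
The plan is to prove this via Cauchy's argument principle applied to the return-difference function $1 + L(s)$. First I would write $L(s) = N(s)/M(s)$ as a ratio of coprime real polynomials, so that
\[
1 + L(s) = \frac{M(s) + N(s)}{M(s)}.
\]
The key algebraic observation is that the zeros of $1 + L(s)$ are exactly the roots of the closed-loop characteristic polynomial $M(s) + N(s)$, i.e. the closed-loop poles, whereas the poles of $1+L(s)$ are the roots of $M(s)$, i.e. the open-loop poles of $L(s)$. Restricting attention to the open RHP, the number of RHP zeros of $1+L$ equals $n_\mathrm{z}$ and the number of RHP poles of $1+L$ equals $n_\mathrm{p}$.

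Next I would invoke the argument principle on the closed $D$-contour, which is traversed clockwise and, in the limit $R \to \infty$, encloses the entire open RHP. The principle states that the net number of clockwise encirclements of the origin made by the image $(1+L)(s)$ as $s$ traverses the contour equals the number of enclosed zeros minus the number of enclosed poles of $1+L$, which by the previous step is $n_\mathrm{z} - n_\mathrm{p}$. Since $z \mapsto z-1$ is a rigid translation and $1+L(s)=0 \iff L(s)=-1$, the number of clockwise encirclements of the origin by $1+L(s)$ equals the number of clockwise encirclements of the point $-1$ by $L(s)$, which is $n_\mathrm{n}$ by definition. Equating the two expressions gives $n_\mathrm{n} = n_\mathrm{z} - n_\mathrm{p}$, i.e. the claimed identity $n_\mathrm{z} = n_\mathrm{n} + n_\mathrm{p}$.

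The remaining work is to justify that only the imaginary-axis portion of the contour contributes to the encirclement count. On the large semicircular arc $Re^{j\phi}$, properness of $L(s)$ forces $L(s) \to D$ (a finite constant) as $R \to \infty$, so the arc maps to a neighborhood of a single point and contributes no net encirclements; this reduces the count to the locus $L(j\omega)$. The main obstacle, and the part requiring the most care, is the treatment of poles of $L(s)$ on the imaginary axis: as noted in the footnote, the $D$-contour must be indented by small semicircles that keep such poles to its left, so that they are excluded from the enclosed region and hence do not count toward $n_\mathrm{p}$. I would verify that, as the indentation radius shrinks to zero, each small semicircle maps to a large arc whose contribution does not alter the net count of encirclements of $-1$, so the bookkeeping remains consistent with the stated convention.
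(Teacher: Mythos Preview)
Your argument is correct and is the standard textbook proof of the Nyquist criterion via Cauchy's argument principle applied to $1+L(s)$. Note, however, that the paper does not supply its own proof of this statement: the theorem is quoted verbatim from Ch.~4.5 of Desoer and Vidyasagar~\cite{desoerFeedbackSystemsInputoutput1975} and is used as a cited preliminary result, so there is no ``paper's proof'' to compare against. Your sketch is precisely the argument one finds in that reference (and in essentially every linear-systems text): factor $1+L=(M+N)/M$, apply the argument principle on the $D$-contour to count RHP zeros minus RHP poles, translate encirclements of $0$ by $1+L$ to encirclements of $-1$ by $L$, and handle the large arc via properness and imaginary-axis poles via indentations. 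Nothing is missing.
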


Note that for stability, $n_\mathrm{z}=0$ is required. Based on the Nyquist criterion, the extended SRG is defined as follows.

\begin{definition}\label{def:lti_srg_extended}
    Let $G$ be an LTI system with transfer function $G(s) \in \R(s)$ that has $n_\mathrm{p}$ unstable poles, then
    \begin{equation}\label{eq:set_of_encircled_unstable_points}
        \mathcal{N}_G = \{ z \in \C \mid N_G(z) +n_\mathrm{p} >0 \},
    \end{equation}
    where the winding number $N_G(z) \in \Z$ denotes the amount of clockwise encirclements of $z$ by $G(j \omega)$\footnote{See Theorem~\ref{thm:nyquist} for details on the Nyquist contour.}. The extended SRG of an LTI operator is defined as 
    \begin{equation}\label{eq:lti_srg_redefinition}
        \SRG'(G) := \SRG(G) \cup \mathcal{N}_G.
    \end{equation}
\end{definition}

\subsubsection{Hard SRGs}

The \emph{hard} SRG of $R: \Lte^p \to \Lte^p$, as defined in~\cite{chenSoftHardScaled2025}, is obtained by taking the union of $\frac{\norm{y_{1T}-y_{2T}}}{\norm{u_{1T}-u_{2T}}} e^{\pm j \angle(u_{1T}-u_{2T}, y_{1T}-y_{2T})}$ for all $T>0$, $u_1,u_2\in \Lte^p$ and $y_i=R u_i$ for $i\in \{1,2\}$. For causal $R$, this is by definition equivalent to 
\begin{equation}\label{eq:hard_srg_def}
    \SRGe(R) = \bigcup_{T>0} \SRG(R|_T),
\end{equation}
where it is understood that $\SRG(R|_T) = \SRG_{L_2^p[0,T]}(R|_T)$. 

Note that the hard SRG can be generalized to MIMO systems, whereas the extended SRG cannot since it is based on the SISO Nyquist criterion. On the other hand, as opposed to the extended SRG, the hard SRG does not have a direct interpretation in terms of the Nyquist criterion, and there is no systematic method available to compute the hard SRG for LTI systems that can handle stable and unstable systems.

\subsection{Stability Analysis using SRGs}\label{sec:srg_stab_analysis}

We will discuss how the three different SRGs (soft, hard, extended) can be used to analyze feedback systems. For simplicity, we focus on the stability of the LTI closed loop $R = (1+L^{-1})^{-1}$ in Fig.~\ref{fig:linear_feedback}, but much more complicated nonlinear feedback systems can be analyzed using SRGs~\cite{krebbekxScaledRelativeGraph2025}.

\subsubsection{Soft SRGs}

Stability analysis tools using the soft SRG were developed in~\cite{chaffeyGraphicalNonlinearSystem2023}, and require $L \in RH_\infty$. Then, Fig.~\ref{fig:linear_feedback} is stable if and only if $\dist(-\tau, \SRG(L)^{-1}) \geq r>0$ for some $r>0$ and for all $\tau \in [0,1]$. Moreover, $\Gamma(R) =1/r$.

\subsubsection{Extended SRGs}

The extended SRG allows one to discard the homotopy condition. The system in Fig.~\ref{fig:linear_feedback} is stable if and only if~\cite{krebbekxScaledRelativeGraph2025} $\dist(-1, \SRG'(L)^{-1}) \geq r' >0$, for some $r'>0$, and $\Gamma(R) = 1/r'$.

\subsubsection{Hard SRGs}\label{sec:hard_srg_stability}

Hard SRG stability analysis, as developed in~\cite{chenSoftHardScaled2025}, has the same advantage as the extended SRG, i.e., that the homotopy condition can be removed. The system $R$ is stable if $\dist(-1, \SRGe(L)^{-1}) \geq \tilde{r}>0$ for some $\tilde{r}>0$. Note that this condition shows that $\Gamma(R|_T) \leq 1/ \tilde{r}$ for all $T>0$, hence $\Gamma(R) \leq 1/\tilde{r}$ must hold.

\subsubsection{Comparison}\label{sec:srg_comparisons}

We note that both the soft SRG and the extended SRG, besides providing an incremental $L_2$-gain bound, also guarantee \emph{well-posedness} (see~\cite{megretskiSystemAnalysisIntegral1997}) of the feedback system~\cite{krebbekxScaledRelativeGraph2025a}. The extended SRG has the additional benefit that the homotopy condition need not be checked. Even though the hard SRG shares this benefit with the extended SRG, the hard SRG has the disadvantage that well-posedness must be \emph{assumed}.

\section{The Hard SRG of LTI Systems}\label{sec:hard_srg_lti}

Let $G : \Lte^p \to \Lte^q$ be an LTI system with real rational transfer function $G(s) \in \R^{q \times p}(s)$. The main result of this work is an algorithm to compute $\SRGe(G)$, and will the focus of this section.

\subsection{Bounding the Hard SRG with Circles}

The type of bound we consider is obtained by intersecting annuli in $\C_\infty$ that are centered on the real axis (see~\cite{patesScaledRelativeGraph2021}). These are disks $D_R(\alpha)$ with a smaller disk $D_r(\alpha)$ removed, where $\alpha \in \R$ and $0\leq r \leq R \leq \infty$. The following lemma shows that a bound for $\SRGe(G_\alpha)$ translates to a bound for $\SRGe(G)$ shifted by $\alpha$.

\begin{lemma}\label{lemma:G_alpha_bound}
    Let $G : \Lte^p \to \Lte^q$ and $\SRGe(G_\alpha) \subseteq \mathcal{C}$ for some $\alpha \in \R$ and $G_\alpha$ defined in \eqref{eq:G_alpha}, then
    \begin{equation}\label{eq:G_alpha_bound}
        \SRGe(G) \subseteq \alpha + \mathcal{C}.
    \end{equation}
\end{lemma}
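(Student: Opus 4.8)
The plan is to reduce the claim about the hard SRG to a statement about the soft SRGs of the truncated operators, using the definition $\SRGe(G) = \bigcup_{T>0} \SRG(G|_T)$ in \eqref{eq:hard_srg_def}. Since a union is contained in $\alpha + \mathcal{C}$ precisely when each member is, it suffices to show that $\SRG(G_\alpha|_T) \subseteq \mathcal{C}$ for every $T>0$ implies $\SRG(G|_T) \subseteq \alpha + \mathcal{C}$ for every $T>0$. In other words, the whole problem collapses to a per-truncation statement about the soft SRG, namely that subtracting the operator $\alpha I$ (as embedded in the definition \eqref{eq:G_alpha} of $G_\alpha$) shifts the soft SRG by $-\alpha$, i.e. $\SRG(G_\alpha|_T) = -\alpha + \SRG(G|_T)$.

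The main work is therefore to establish this shift property for the soft SRG of a truncated operator. First I would unpack the definition of $z_{G|_T}(u_1,u_2)$ from the soft SRG definition \eqref{eq:soft_srg_def}: for distinct inputs $u_1,u_2$, with $y_i = G|_T u_i$, the contributed point has modulus $\norm{y_1-y_2}/\norm{u_1-u_2}$ and argument $\pm\angle(u_1-u_2, y_1-y_2)$. Writing $\Delta u = u_1-u_2$ and $\Delta y = y_1-y_2$, the contributed complex number is exactly the pair $(\mathrm{Re}\inner{\Delta u}{\Delta y}, \pm|\mathrm{Im}|)/\norm{\Delta u}^2$ scaled appropriately; more usefully, it is the complex number $w$ with $|w| = \norm{\Delta y}/\norm{\Delta u}$ and $\mathrm{Re}(w) = \inner{\Delta u}{\Delta y}/\norm{\Delta u}^2$. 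Because $G_\alpha$ acts as $G$ with an $\alpha I$ subtracted on the matching block (and zero-padding on the mismatched rows, which does not change inner products on the relevant coordinates), linearity gives $G_\alpha|_T \Delta u = G|_T \Delta u - \alpha \Delta u$, so the output difference becomes $\Delta y - \alpha \Delta u$. The key algebraic identity is then that the map $w \mapsto w - \alpha$ sends the point generated by $(\Delta u, \Delta y)$ to the point generated by $(\Delta u, \Delta y - \alpha\Delta u)$: indeed the real part shifts by $\inner{\Delta u}{-\alpha\Delta u}/\norm{\Delta u}^2 = -\alpha$, while the imaginary part, governed by the component of $\Delta y$ orthogonal to $\Delta u$, is unchanged since $\alpha\Delta u$ is parallel to $\Delta u$. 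This is precisely the geometric content of subtracting a real multiple of the identity.

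Carrying this out, I would verify that the multivalued/$\infty$ bookkeeping is preserved (subtracting a single-valued linear map does not create or destroy multi-valuedness, so the $\{\infty\}$ clause transfers), and conclude $\SRG(G_\alpha|_T) = \SRG(G|_T) - \alpha$ for each $T$. Taking unions over $T>0$ gives $\SRGe(G_\alpha) = \SRGe(G) - \alpha$, so the hypothesis $\SRGe(G_\alpha)\subseteq\mathcal{C}$ yields $\SRGe(G) - \alpha \subseteq \mathcal{C}$, i.e. $\SRGe(G)\subseteq \alpha+\mathcal{C}$, as claimed. I expect the main obstacle to be handling the MIMO embedding in \eqref{eq:G_alpha} cleanly: one must check that the zero-padding to size $n=\max\{p,q\}$ and the placement of the $\alpha I$ block interact correctly so that the operator-level identity $G_\alpha|_T\Delta u = G|_T\Delta u - \alpha\Delta u$ holds in the embedded space and that the norms and inner products defining the SRG point are taken consistently in that embedded space. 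Once that bookkeeping is pinned down, the argument is the straightforward real-shift-by-$\alpha$ computation, and no compactness or closure subtleties arise because the shift is an isometry-compatible affine map that commutes with taking unions over $T$.
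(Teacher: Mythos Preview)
Your proposal is correct and follows essentially the same route as the paper: reduce via \eqref{eq:hard_srg_def} to a per-$T$ statement, establish $\SRG(G_\alpha|_T) = \SRG(G|_T) - \alpha$, and take the union over $T$. The only difference is that the paper obtains the shift identity by invoking the MIMO SRG interconnection rules from~\cite{krebbekxGraphicalAnalysisNonlinear2025} (after noting that the embedded input space $\mathcal{U}_{p,T}$ is a closed subspace on which the $\alpha$-block acts as $\alpha I$), whereas you derive it directly from the definition via the real-part/imaginary-part computation; both arguments are equivalent in content.
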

\begin{proof}
    See the appendix.
\end{proof}

Now suppose that for all $u \in \Lte^p$ and $T>0$, it holds that 
\begin{equation}\label{eq:r_R_alpha_defs}
    0 \leq r_\alpha \leq \norm{(G_\alpha u)_T} / \norm{u_T} \leq R_\alpha \leq \infty,
\end{equation}
where $r_\alpha$ ($R_\alpha$) is the largest (smallest) value such that~\eqref{eq:r_R_alpha_defs} holds. Then, by the definition of the hard SRG~\eqref{eq:hard_srg_def}, and linearity of $G_\alpha$, we can conclude $\SRGe(G_\alpha) \subseteq D_{R_\alpha}(0) \setminus D_{r_\alpha}(0)$, which is, by Lemma~\ref{lemma:G_alpha_bound}, equivalent to
\begin{equation}\label{eq:hard_srg_G_circle_bound}
    \SRGe(G) \subseteq D_{R_\alpha}(\alpha) \setminus D_{r_\alpha}(\alpha).
\end{equation}
In general,~\eqref{eq:hard_srg_G_circle_bound} will not be an equality. Instead, we are interested in a bound for the hard SRG in the form of~\eqref{eq:soft_srg_lti}, i.e., intersecting~\eqref{eq:hard_srg_G_circle_bound} for $\alpha \in \R$.

\subsection{Computing the Maximum Gain Radius $R_\alpha$}

The smallest \emph{upper bound} $R_\alpha$ for $G_\alpha$ in \eqref{eq:r_R_alpha_defs} is found using the following proposition.

\begin{proposition}\label{prop:maxgain}
    Let $G : \Lte^p \to \Lte^q$ be an LTI system with transfer function $G(s) \in \R^{q \times p}(s)$, then $R_\alpha$ in~\eqref{eq:r_R_alpha_defs} reads
    \begin{equation}
        R_\alpha = \overline{\sigma}(G_\alpha) \text{ if } G \in RH_\infty^{q \times p}, \text{ else } R_\alpha = \infty.
    \end{equation}
\end{proposition}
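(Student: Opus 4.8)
The plan is to establish the two cases separately, where the dividing line is whether $G$ is stable, i.e.\ whether $G \in RH_\infty^{q \times p}$. The quantity $R_\alpha$ is defined in \eqref{eq:r_R_alpha_defs} as the smallest uniform upper bound on the truncated gain $\norm{(G_\alpha u)_T}/\norm{u_T}$ over all $u \in \Lte^p$ and all $T>0$. Since $G_\alpha$ differs from $G$ only by a bounded feedthrough shift by $\alpha I$ (embedded via \eqref{eq:G_alpha}), $G_\alpha \in RH_\infty$ if and only if $G \in RH_\infty$, so it suffices to characterize the truncated gain of a causal LTI system in terms of its stability.

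First I would treat the stable case $G \in RH_\infty^{q \times p}$. Here $\sup_{u,T} \norm{(G_\alpha u)_T}/\norm{u_T}$ should be shown to equal $\overline{\sigma}(G_\alpha) = \norm{G_\alpha}_\infty$. The upper bound direction is the easier half: by causality, $(G_\alpha u)_T = (G_\alpha u_T)_T$, and truncation is non-expansive, so $\norm{(G_\alpha u)_T} = \norm{(G_\alpha u_T)_T} \leq \norm{G_\alpha u_T} \leq \norm{G_\alpha}_\infty \norm{u_T}$, using the fact (stated in the preliminaries) that $\Gamma(G_\alpha) = \overline{\sigma}(G_\alpha) = \norm{G_\alpha}_\infty$ on $L_2$ via Parseval. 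For the matching lower bound, I would argue that the truncated gain approaches $\overline{\sigma}(G_\alpha)$ as $T \to \infty$: take an input concentrated near the frequency $\omega^\star$ achieving $\overline{\sigma}(G_\alpha) = \overline{\sigma}(G_\alpha(j\omega^\star))$ and aligned with the corresponding maximal right singular vector; for $T$ large the truncation error becomes negligible and the Rayleigh-type quotient recovers $\overline{\sigma}(G_\alpha)$ arbitrarily closely. This gives $R_\alpha = \overline{\sigma}(G_\alpha)$.

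For the unstable case, the claim is $R_\alpha = \infty$. The idea is that if $G$ (equivalently $G_\alpha$) has an unstable pole $p_i$ with $\mathrm{Re}(p_i) \geq 0$, then one can construct a truncated input on $[0,T]$ whose truncated output gain grows without bound. Concretely, the state-space realization contains an unstable mode $e^{p_i t}$ (or a growing polynomial-times-exponential for imaginary-axis poles, cf.\ integrators), and by exciting this mode over a finite horizon the output energy $\norm{(G_\alpha u)_T}$ can be made arbitrarily large relative to the finite input energy $\norm{u_T}$ as $T$ grows. Since the supremum defining $R_\alpha$ ranges over all $T>0$, no finite bound can hold, forcing $R_\alpha = \infty$.

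The main obstacle I expect is the lower-bound construction in the unstable case, specifically making the growth argument rigorous and uniform over the relevant pole locations. One must handle poles strictly in the open RHP, purely imaginary poles, and higher-order (e.g.\ repeated integrator) poles, each producing a different growth rate of the impulse/step response on $[0,T]$; in each case the finite-horizon energy amplification must be shown to diverge with $T$. A clean route is to observe that an unstable LTI system fails to be $L_2$-bounded on $\Lte$ precisely because its truncated induced norms are unbounded in $T$, so that $\sup_T \Gamma(G_\alpha|_T) = \infty$; formalizing this via the growing output energy of a single destabilizing mode, while being careful with imaginary-axis poles that are kept to the left of the $D$-contour (as in the convention of Theorem~\ref{thm:nyquist}), is the delicate part.
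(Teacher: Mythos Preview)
Your approach matches the paper's closely on the cases you treat: the stable case via causality plus Parseval for the upper bound and tightness from $\overline{\sigma}(G_\alpha)=\|G_\alpha\|_\infty$; the unstable-proper case via a fixed $L_2$ input whose output leaves $L_2$, so that $\|(G_\alpha u)_T\|/\|u_T\|\to\infty$ as $T\to\infty$. The one substantive omission is the \emph{improper} case. Your dichotomy is ``$G\in RH_\infty$'' versus ``$G$ has an unstable pole'', but $G\notin RH_\infty$ also occurs when $G$ is stable yet improper; there is then no unstable mode to excite and your growth argument does not apply. The paper treats this as a separate branch, writing $G_\alpha(s)=s^k\tilde G(s)$ with $\tilde G$ bi-proper and observing that the response to $\sin(\omega t)$ grows like $\omega^k$, forcing $R_\alpha=\infty$. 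This case is not idle: Proposition~\ref{prop:mingain_square} invokes the present result on $G_\alpha^{-1}$, which is generically improper even when $G_\alpha$ is proper, so without the improper branch your version of Proposition~\ref{prop:maxgain} could not be used there.
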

\begin{proof}
    See the appendix.
\end{proof}

\subsection{Computing the Minimum Gain Radius $r_\alpha$}

Whereas $R_\alpha$ could be easily computed via the $H_\infty$-norm, computing the smallest lower bound $r_\alpha$ is more challenging. The difficulty arises since~\eqref{eq:r_R_alpha_defs} must hold for all $T>0$, and using $\underline{\sigma}(G_\alpha)$ via Parseval's identity only provides a valid lower bound if $G_\alpha$ is stable and $T \to \infty$.

The solution is to \emph{invert} a square $G_\alpha$, and apply Proposition~\ref{prop:maxgain} to find a maximum gain on the inverse $u = G_\alpha^{-1} y_T$. %

\begin{proposition}\label{prop:mingain_square}
    Let $G : \Lte^p \to \Lte^p$ be a square LTI system with transfer function $G(s) \in \R^{p \times p}(s)$, then $r_\alpha$ in~\eqref{eq:r_R_alpha_defs} reads
    \begin{equation}
        r_\alpha = \underline{\sigma}(G_\alpha) \text{ if } G_\alpha \text{ is minimum-phase, else } r_\alpha = 0.
    \end{equation}
\end{proposition}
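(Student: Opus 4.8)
The plan is to reduce the minimum-gain problem to a maximum-gain problem for the inverse system $G_\alpha^{-1}$ and then invoke Proposition~\ref{prop:maxgain}. By the definition~\eqref{eq:r_R_alpha_defs} and the convention that $r_\alpha$ is the largest admissible lower bound, we have $r_\alpha = \inf_{T>0}\inf_{u \neq 0} \norm{(G_\alpha u)_T}/\norm{u_T}$. Causality of $G_\alpha$ gives $(G_\alpha u)_T = (G_\alpha u_T)_T = G_\alpha|_T u_T$, so $r_\alpha = \inf_{T>0}\underline{\sigma}(G_\alpha|_T)$, where $\underline{\sigma}(\cdot)$ now denotes the minimum gain of the truncated operator on $L_2^p[0,T]$. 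The whole argument then rests on relating $\underline{\sigma}(G_\alpha|_T)$ to the operator norm of the truncated inverse.

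First I would dispose of the degenerate case. Since $G$ is square, $G_\alpha(s)$ fails to be invertible only when it is not bi-proper, i.e. when $D_\alpha := D - \alpha I$ is singular: bi-properness forces $\det G_\alpha(s) \to \det D_\alpha \neq 0$ as $s \to \infty$, hence full normal rank and invertibility. If $D_\alpha$ is singular, then $\underline{\sigma}(G_\alpha(j\omega)) \to \underline{\sigma}(D_\alpha) = 0$ as $\omega \to \infty$, so $\underline{\sigma}(G_\alpha) = 0$; a narrow-band, high-frequency input aligned with $\ker D_\alpha$ then makes the truncated gain arbitrarily small on every horizon, giving $r_\alpha = 0$ and matching the claimed formula irrespective of the minimum-phase status. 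Hence I may assume $G_\alpha$ bi-proper, so that the causal inverse $G_\alpha^{-1}$ exists and is proper.

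The core step is the inversion identity on the truncated space. For causal operators, composition commutes with truncation, $(AB)|_T = A|_T\, B|_T$, so $G_\alpha G_\alpha^{-1} = I = G_\alpha^{-1} G_\alpha$ yields $(G_\alpha|_T)^{-1} = (G_\alpha^{-1})|_T$, with both truncations bounded for each finite $T$ (a proper LTI system has a bounded truncation on $[0,T]$ even when unstable). Therefore $\underline{\sigma}(G_\alpha|_T) = 1/\Gamma(G_\alpha^{-1}|_T)$, and passing to the infimum over $T$ through the reciprocal,
\[ r_\alpha = \frac{1}{\sup_{T>0}\Gamma(G_\alpha^{-1}|_T)}. \]
Proposition~\ref{prop:maxgain} applied to the LTI system $G_\alpha^{-1}$ gives $\sup_{T>0}\Gamma(G_\alpha^{-1}|_T) = \overline{\sigma}(G_\alpha^{-1})$ when $G_\alpha^{-1}\in RH_\infty^{p\times p}$ and $\infty$ otherwise. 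Since the poles of $G_\alpha^{-1}$ are exactly the zeros of $G_\alpha$, the inverse lies in $RH_\infty$ precisely when $G_\alpha$ is minimum phase; in that case $\overline{\sigma}(G_\alpha^{-1}) = \sup_\omega \overline{\sigma}(G_\alpha^{-1}(j\omega)) = 1/\underline{\sigma}(G_\alpha)$, so $r_\alpha = \underline{\sigma}(G_\alpha)$. If $G_\alpha$ is not minimum phase, $G_\alpha^{-1}$ has a pole in the closed RHP, so $G_\alpha^{-1}\notin RH_\infty$ and the supremum is $\infty$, whence $r_\alpha = 0$.

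I expect the main obstacle to be rigorously justifying the inversion identity at the operator level: one must verify that $G_\alpha|_T$ is a boundedly invertible operator on $L_2^p[0,T]$ with inverse $(G_\alpha^{-1})|_T$, so that the relation $\underline{\sigma}(M) = 1/\overline{\sigma}(M^{-1})$ is legitimate, and confirm that the $\inf_T$ and $\sup_T$ genuinely interchange through the reciprocal. A secondary technical point is the non-bi-proper case, where $r_\alpha = 0$ needs an honest finite-horizon input construction rather than the mere observation that $\underline{\sigma}(G_\alpha) = 0$; the crux, however, is entirely contained in the bi-proper minimum-phase argument above, where the (possible) instability of $G_\alpha$ itself is irrelevant and only the stability of its inverse matters.
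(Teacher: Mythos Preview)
Your proposal is correct and follows the same strategy as the paper: invert $G_\alpha$ and invoke Proposition~\ref{prop:maxgain} on $G_\alpha^{-1}$, with the pole/zero swap identifying minimum-phase $G_\alpha$ with stable $G_\alpha^{-1}$. Your case split (bi-proper vs.\ not) and your explicit truncated-operator identity $(G_\alpha|_T)^{-1}=(G_\alpha^{-1})|_T$ are organized a bit differently from the paper (which separates non-full-normal-rank first and treats the improper-inverse case as a subcase of the minimum-phase branch), but the core argument is the same.
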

\begin{proof}
    See the appendix.
\end{proof}

We note that Proposition~\ref{prop:mingain_square} could be extended to the case where $G_\alpha : \Lte^p \to \Lte^p$ is non-square. In the wide case ($p>q$), the transfer function $G_\alpha(s)$ will always have a nontrivial algebraic kernel, i.e., $G_\alpha(s)U(s)=0$ for some $U(s) \in \R^p(s)$, which implies $r_\alpha=0$. In the tall case ($p<q$), one can use a left pseudo-inverse for $G_\alpha(s)$. The proof for the tall case requires a significant amount of additional technical details, hence we only consider the square case in this paper. 

Another reason for studying only the square case is that a MIMO LTI feedback system can often be written as a square system $L$ in simple feedback interconnection such as in Fig.~\ref{fig:linear_feedback}, since the analysis of the loop transfer boils down to $L=GK$, which is square.

Finally, we note that it is known that $r_\alpha = \underline{\sigma}(G)$ and $R_\alpha = \overline{\sigma}(G)$ for stable systems satisfies~\eqref{eq:r_R_alpha_defs} for $T \to \infty$~\cite{desoerFeedbackSystemsInputoutput1975,zhouRobustOptimalControl1996}, and that the $R_\alpha$ result for any $T > 0$ can be generalized to the nonlinear case (e.g., see~\cite[Ch. 1]{vanderschaftL2GainPassivityTechniques2017}). Propositions~\ref{prop:maxgain} and~\ref{prop:mingain_square} provide extensions to finite $T>0$ and unstable systems.

\subsection{Overview of the Bounding Algorithm}

By combining Propositions~\ref{prop:maxgain} and~\ref{prop:mingain_square}, we can formulate the following theorem that \emph{exactly} computes the hard SRG of a square LTI system.

\begin{theorem}\label{thm:hard_srg_bound_lti_circles}
    Let $G : \Lte^p \to \Lte^p$ be a square LTI system with transfer function $G(s) \in \R^{p \times p}(s)$, then 
    \begin{equation}\label{eq:hard_srg_G_circle_bound_thm}
        \SRGe(G) = \mathcal{G}(\{r_\alpha\}, \{R_\alpha \}),
    \end{equation}
    where $r_\alpha, R_\alpha$ are obtained using Propositions~\ref{prop:maxgain} and~\ref{prop:mingain_square}.
\end{theorem}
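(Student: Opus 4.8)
The plan is to establish the set equality in~\eqref{eq:hard_srg_G_circle_bound_thm} by proving two inclusions, leveraging the machinery already assembled in the preceding subsections. For the inclusion $\SRGe(G) \subseteq \mathcal{G}(\{r_\alpha\},\{R_\alpha\})$, I would simply invoke the reasoning already given around~\eqref{eq:hard_srg_G_circle_bound}: for each fixed $\alpha \in \R$, Propositions~\ref{prop:maxgain} and~\ref{prop:mingain_square} supply the extremal radii $r_\alpha$ and $R_\alpha$ for which the gain bound~\eqref{eq:r_R_alpha_defs} holds uniformly over all $u \in \Lte^p$ and all $T>0$. By the definition of the hard SRG~\eqref{eq:hard_srg_def} and linearity of $G_\alpha$, this forces $\SRGe(G_\alpha) \subseteq D_{R_\alpha}(0) \setminus D_{r_\alpha}(0)$, and Lemma~\ref{lemma:G_alpha_bound} then yields $\SRGe(G) \subseteq D_{R_\alpha}(\alpha)\setminus D_{r_\alpha}(\alpha)$. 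Intersecting over all $\alpha \in \R$ gives the ``$\subseteq$'' direction directly, since the intersection is by definition $\mathcal{G}(\{r_\alpha\},\{R_\alpha\})$.

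The reverse inclusion $\mathcal{G}(\{r_\alpha\},\{R_\alpha\}) \subseteq \SRGe(G)$ is the substantive part and where I expect the main difficulty. Here the key observation is that $r_\alpha$ and $R_\alpha$ were defined in~\eqref{eq:r_R_alpha_defs} as the \emph{largest} lower bound and \emph{smallest} upper bound, i.e., they are tight: for each $\alpha$ the radii cannot be improved without excluding a genuine gain value attained by $G_\alpha$ on some truncated input. I would argue that tightness of each individual annulus, combined with h-convexity, upgrades to tightness of the full intersection. Concretely, I would show that any point $z \in \mathcal{G}(\{r_\alpha\},\{R_\alpha\})$ lies in $\SRGe(G)$ by exhibiting, for the value $|z - \alpha|$ (for an appropriate $\alpha$ determined by $z$), a truncated input $u_T$ realizing the gain $\norm{(G_\alpha u)_T}/\norm{u_T} = |z-\alpha|$ with the correct angle, so that $z \in \SRG(G_\alpha|_T) + \alpha$. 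The natural route is to use that $\SRGe(G)$ is h-convex — since by~\eqref{eq:hard_srg_def} it is a union of soft SRGs $\SRG(G|_T)$, each of which is h-convex by the result of~\cite{patesScaledRelativeGraph2021} for linear operators — and that an h-convex set is exactly reconstructed from its supporting annuli via~\eqref{eq:h-convex-set-disk-representation}.

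The main obstacle I anticipate is verifying that the radii $r_\alpha, R_\alpha$ produced by Propositions~\ref{prop:maxgain} and~\ref{prop:mingain_square} are precisely the \emph{tightest} supporting radii for the h-convex set $\SRGe(G)$, not merely valid outer bounds. The upper radius is comparatively clean: when $G \in RH_\infty$, $R_\alpha = \overline{\sigma}(G_\alpha)$ is attained in the limit $T \to \infty$ by a near-worst-case frequency input, and when $G$ is unstable $R_\alpha = \infty$ so the outer disk is $\C_\infty$ and imposes no constraint. The lower radius requires more care: I must confirm that $r_\alpha = \underline{\sigma}(G_\alpha)$ in the minimum-phase case is actually achieved (again asymptotically in $T$) and that in the non-minimum-phase case $r_\alpha = 0$ is forced, so that no inner disk is removed. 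This hinges on the inversion argument underlying Proposition~\ref{prop:mingain_square}, namely that $G_\alpha^{-1}$ fails to be stable exactly when $G_\alpha$ is non-minimum-phase, so the reciprocal gain becomes unbounded and the lower gain drops to zero for some truncation.

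Assembling these, I would conclude that the radii are simultaneously valid (giving ``$\subseteq$'') and tight (giving ``$\supseteq$''), so that intersecting the tight annuli over $\alpha \in \R$ reproduces $\SRGe(G)$ exactly. A clean way to package the reverse inclusion is to note that h-convexity of $\SRGe(G)$ together with~\eqref{eq:h-convex-set-disk-representation} guarantees $\SRGe(G) = \mathcal{G}(\{\tilde r_\alpha\},\{\tilde R_\alpha\})$ for its own tightest radii $\tilde r_\alpha, \tilde R_\alpha$, and then to identify $\tilde r_\alpha = r_\alpha$ and $\tilde R_\alpha = R_\alpha$ via the extremality built into~\eqref{eq:r_R_alpha_defs}; this reduces the set equality to an equality of radii per $\alpha$, which is exactly the content of the two propositions.
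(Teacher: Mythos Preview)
Your $\subseteq$ direction is fine and matches the paper. The gap is in the reverse inclusion, specifically in your claim that $\SRGe(G)$ is h-convex. Your justification --- that it is a union over $T>0$ of the h-convex sets $\SRG(G|_T)$ --- does not work: unions of h-convex sets are not h-convex in general, any more than unions of convex sets are convex. While $\SRGe(G)$ does turn out to be h-convex (indeed this is a consequence of~\eqref{eq:hard_srg_G_circle_bound_thm}), you cannot assume it at this stage without circularity, and without it you cannot invoke the representation~\eqref{eq:h-convex-set-disk-representation} for $\SRGe(G)$ itself, so the identification $\tilde r_\alpha = r_\alpha$, $\tilde R_\alpha = R_\alpha$ has nothing to attach to.

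The paper avoids this by applying h-convexity only where it is guaranteed, namely to each individual $\SRG(G|_T)$, since $G|_T$ is a bounded linear operator on the Hilbert space $L_2^p[0,T]$. This yields, for every fixed $T$, a representation $\SRG(G|_T) = \mathcal{G}(\{r_\alpha^T\},\{R_\alpha^T\})$ with $T$-dependent tight radii defined as the extremal values in~\eqref{eq:r_R_alpha_defs} at that $T$. The hard SRG is then written as
\[
\SRGe(G) = \bigcup_{T>0}\, \bigcap_{\alpha \in \R} \big( D_{R_\alpha^T}(\alpha) \setminus D_{r_\alpha^T}(\alpha) \big),
\]
and the proof finishes by identifying this with $\bigcap_{\alpha}\big( D_{R_\alpha}(\alpha) \setminus D_{r_\alpha}(\alpha)\big)$ using $r_\alpha = \inf_T r_\alpha^T$ and $R_\alpha = \sup_T R_\alpha^T$ together with the nesting $D_{R_\alpha^T}(\alpha) \subseteq D_{R_\alpha}(\alpha)$ and $D_{r_\alpha}(\alpha) \subseteq D_{r_\alpha^T}(\alpha)$. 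The essential move you are missing is to push h-convexity \emph{inside} the union over $T$ rather than asserting it for the union itself.
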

\begin{proof}
    See the appendix.
\end{proof}

Note that~\eqref{eq:hard_srg_G_circle_bound_thm} becomes an inclusion ($\subseteq$) if one takes $\alpha \in \mathcal{A} \subsetneq \R$. Therefore, numerical implementations of Theorem~\ref{thm:hard_srg_bound_lti_circles} will never \emph{under}-approximate $\SRGe(G)$. 

If $G(s) \notin \R^{p \times p}(s)$ since it contains a time delay, $r_\alpha =0$ always holds, and $R_\alpha$ is unaffected.

\section{Connection to the Nyquist Criterion}\label{sec:connection_nyquist_criterion}

\subsection{SISO Nyquist Criterion}

In the SISO case, i.e., $p=1$, the hard SRG formula from Theorem~\ref{thm:hard_srg_bound_lti_circles} has a direct connection to the Nyquist criterion and the extended SRG from~\cite{krebbekxGraphicalAnalysisNonlinear2025}. 

\begin{theorem}\label{thm:hard_and_extended_srg_equivalence}
    Let $G : \Lte \to \Lte$ be an LTI system with transfer function $G(s) \in \R(s)$, then $\SRGe(G) = \SRG'(G)$.
\end{theorem}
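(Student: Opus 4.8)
The plan is to prove the two inclusions $\SRGe(G) \subseteq \SRG'(G)$ and $\SRG'(G) \subseteq \SRGe(G)$ separately, using the fact that for $p=1$ both sides admit an explicit disk-annulus representation. By Theorem~\ref{thm:hard_srg_bound_lti_circles} we already have $\SRGe(G) = \mathcal{G}(\{r_\alpha\}, \{R_\alpha\})$ with $r_\alpha, R_\alpha$ given by Propositions~\ref{prop:maxgain} and~\ref{prop:mingain_square}. The right-hand side is $\SRG'(G) = \SRG(G) \cup \mathcal{N}_G$, where by~\eqref{eq:soft_srg_lti_siso} the soft part is $\SRG(G) = \hco(\{G(j\omega)\}) = \mathcal{G}(\{\tilde r_\alpha\}, \{\tilde R_\alpha\})$ with $\tilde r_\alpha = \inf_\omega |G(j\omega)-\alpha|$ and $\tilde R_\alpha = \sup_\omega |G(j\omega)-\alpha|$. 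The strategy is therefore to compare these circle data directly, treating the stable and unstable cases, and to show that adjoining the encirclement set $\mathcal{N}_G$ to the soft SRG exactly fills in the extra region that the hard SRG captures.

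First I would dispose of the outer radii: in both descriptions $R_\alpha$ equals $\sup_\omega |G(j\omega)-\alpha| = \overline\sigma(G_\alpha)$ when $G$ is stable, and $R_\alpha = \infty$ when $G$ is unstable, so the outer disks agree in every case (for the unstable case note that $\sup_\omega|G(j\omega)-\alpha|=\infty$ whenever $G$ is improper or has imaginary-axis poles, and otherwise the winding set $\mathcal{N}_G$ forces the bound to $\C_\infty$; this needs a short argument). The real work is the inner radii. When $G_\alpha$ is minimum phase, Proposition~\ref{prop:mingain_square} gives $r_\alpha = \underline\sigma(G_\alpha) = \inf_\omega |G(j\omega)-\alpha| = \tilde r_\alpha$, and I would show that in this regime $\mathcal{N}_G$ contributes nothing inside $D_{r_\alpha}(\alpha)$, so the two representations coincide. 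When $G_\alpha$ is non-minimum-phase, Proposition~\ref{prop:mingain_square} gives $r_\alpha = 0$; the claim is then that the point $\alpha$ and the punctured neighborhood that the soft SRG would have excluded are precisely recovered by $\mathcal{N}_G$.

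The crux is this last case, and the main obstacle is to show that $\mathcal{N}_G$ fills exactly the hole $D_{\tilde r_\alpha}(\alpha)$ for every $\alpha$ whose shifted system $G_\alpha$ is non-minimum-phase. The key observation is that $G_\alpha$ has a transmission zero in the closed RHP exactly when $z=\alpha$ is a point that $G(j\omega)$ either passes through or encircles with enough winding, because a closed-RHP zero of $G(s)-\alpha$ is an unstable closed-loop pole of the loop $L = 1/(G-\alpha)$, which by the Nyquist criterion (Theorem~\ref{thm:nyquist}) is governed by $N_G(\alpha) + n_\mathrm{p}$. I would make this precise by identifying the condition $N_G(\alpha)+n_\mathrm{p}>0$ defining $\mathcal{N}_G$ in~\eqref{eq:set_of_encircled_unstable_points} with the existence of a closed-RHP root of $G(s)-\alpha$, i.e. with $G_\alpha$ being non-minimum-phase, via the argument principle applied to $G(s)-\alpha$ along the Nyquist D-contour. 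Establishing this equivalence $\alpha \in \mathcal{N}_G \iff G_\alpha$ non-minimum-phase (for $\alpha$ not on the Nyquist curve) is what ties the two descriptions together.

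Once that equivalence is in hand, the two inclusions follow by a clean bookkeeping argument on the annuli. For $\SRG'(G) \subseteq \SRGe(G)$: any point in $\mathcal{N}_G$ has $G_\alpha$ non-minimum-phase so $r_\alpha = 0$, placing it inside the hard SRG, and the soft part lies in the hard SRG because $\tilde r_\alpha \le r_\alpha$ termwise. For the reverse inclusion, a point $z \in \SRGe(G)$ lying strictly inside some soft hole $D_{\tilde r_\alpha}(\alpha)$ but inside the hard annulus must have $r_\alpha < \tilde r_\alpha$, forcing $r_\alpha = 0$ and hence $G_\alpha$ non-minimum-phase, so $\alpha \in \mathcal{N}_G$ and $z$ is captured by the encirclement set. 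Thus I expect the proof to reduce, after the $R_\alpha$ remark, entirely to the winding-number/non-minimum-phase dictionary, which is the step deserving the most care.
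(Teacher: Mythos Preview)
Your plan rests on the same key identity the paper uses: for real $\alpha$ off the Nyquist curve, the argument principle applied to $G(s)-\alpha$ along the $D$-contour gives
\[
\alpha\in\mathcal{N}_G \iff N_G(\alpha)+n_\mathrm{p}>0 \iff G_\alpha \text{ is non-minimum-phase} \iff r_\alpha^{\mathrm e}=0,
\]
and both proofs then compare the hard radii of Theorem~\ref{thm:hard_srg_bound_lti_circles} with the soft radii $\tilde r_\alpha=\underline\sigma(G_\alpha)$, $\tilde R_\alpha=\overline\sigma(G_\alpha)$. The paper packages this via complements and a case split on integrators; you package it via two inclusions. That is fine.

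There is, however, a real gap in your final bookkeeping. In both directions you silently identify the (generally complex) point $z$ with a real center $\alpha$: you write ``any point in $\mathcal{N}_G$ has $G_\alpha$ non-minimum-phase so $r_\alpha=0$'' and, conversely, ``so $\alpha\in\mathcal{N}_G$ and $z$ is captured by the encirclement set''. But the dictionary above speaks only about \emph{real} $\alpha$; from $z\in\mathcal{N}_G$ you cannot set $\alpha=z$, and from $\alpha\in\mathcal{N}_G$ you do not yet have $z\in\mathcal{N}_G$. What is missing is exactly the connected-component step the paper supplies: since $\mathrm{Nyquist}(G)\subseteq\SRG(G)$, the winding number $N_G(\cdot)$ is constant on each connected component of $\SRG(G)^c$, and every such component (a soft hole $D_{\tilde r_\alpha}(\alpha)$ or the exterior of some $D_{\tilde R_\alpha}(\alpha)$) meets $\R$. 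With this in hand your argument closes: if $z\in\mathcal{N}_G\setminus\SRG(G)$ were excluded by some hard annulus at $\alpha_0$, then $r_{\alpha_0}^{\mathrm e}>0$ forces $G_{\alpha_0}$ minimum-phase, hence $\alpha_0\notin\mathcal{N}_G$, and since $z$ lies in $D_{\tilde r_{\alpha_0}}(\alpha_0)$ it shares the winding number of $\alpha_0$, contradicting $z\in\mathcal{N}_G$; the outer case and the reverse inclusion are analogous. A minor slip: your inequality $\tilde r_\alpha\le r_\alpha$ is backwards; one has $r_\alpha^{\mathrm e}\le\tilde r_\alpha$ and $\tilde R_\alpha\le R_\alpha^{\mathrm e}$, which is what yields $\SRG(G)\subseteq\SRGe(G)$.
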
 %

\begin{proof}
    See the appendix.
\end{proof}

An immediate consequence of Theorem~\ref{thm:hard_and_extended_srg_equivalence} is that in the SISO case, analysis of feedback systems with the hard SRG \emph{does} guarantee well-posedness.

\subsection{MIMO Nyquist Criterion}\label{sec:mimo_nyquist}

We will now discuss how hard SRG stability analysis in Section~\ref{sec:hard_srg_stability} can be used as an alternative to the Nyquist criterion for MIMO LTI feedback systems~\cite{macfarlaneGeneralizedNyquistStability1977}, and the GNC~\cite{desoerGeneralizedNyquistStability1980}.

\subsubsection{An alternative Nyquist criterion}

Let $L : \Lte^p \to \Lte^p$ be a square LTI system with transfer function $L(s) \in \R^{p \times p}(s)$. Then, stability of the closed loop $R$ in Fig.~\ref{fig:linear_feedback} can be directly checked by verifying if $\dist(-1, \SRGe(L)^{-1}) \geq r >0$, which is equivalent to $\dist(-1, \SRGe(L)) \geq r' >0$. 

Because of~\eqref{eq:hard_srg_def}, the SRG interconnection rules on Hilbert spaces, see e.g.~\cite{ryuScaledRelativeGraphs2022,krebbekxGraphicalAnalysisNonlinear2025}, carry over to the hard SRG.
In particular, $\SRGe(k L) = \SRGe(L) k = k \SRGe(L)$ for $0 \neq k \in \R$ and $\SRGe(I+L)=1+\SRGe(L)$, where $I$ is the identity matrix in $\C^{p \times p}$. Therefore, one can use $\SRGe(L)$ to design gains $k, k_1,k_2 \in \R$ such that
\begin{subequations}\label{eq:stability_criteria}
\begin{align}
    \dist(-1/k, \SRGe(L)) & \geq r'>0, \text{ or } \label{eq:k_criterion} \\ \dist(-1, k_1+k_2 \SRGe(L)) & \geq r'' >0. \label{eq:k1k2_criterion}
\end{align}
\end{subequations}
The gain $k$ would be placed in the feedback path, and the gains $k_1,k_2$ define the new loop transfer $L' = k_1 I +k_2 L$, and these choices of gains yield a stable feedback system if the corresponding stability criterion in~\eqref{eq:stability_criteria} is satisfied. Note that the MIMO SRG interconnection rules~\cite{krebbekxGraphicalAnalysisNonlinear2025} only work for scalars, so when $k,k_1,k_2$ were to be \emph{matrices} of gains, one must directly compute $\SRGe(k L)$ or $\SRGe(k_1 + k_2 L)$.

The hard SRG also naturally provides a stability margin for the system in Fig.~\ref{fig:linear_feedback}, which is the smallest distance between $-1$ and $\SRGe(L)^{-1}$. If the gains $k,k_1,k_2$ above are varied, the result on stability and performance can be directly checked without having to re-compute the hard SRG.

If one merely wants to check stability of a feedback loop, Theorem~\ref{thm:hard_srg_bound_lti_circles} can be used by evaluating only $r_{-1}$, i.e., $\alpha=-1$. If $r_{-1}>0$, then one is assured that $\dist(-1, \SRGe(L)) \geq r_{-1}>0$, which proves stability. Since the sensitivity reads $S=(1+L)^{-1}$, the bound $\norm{S}_\infty = \Gamma(S) \leq 1/r_{-1}$ holds.

\subsubsection{Comparison with MIMO Nyquist and GNC}

As opposed to the MIMO Nyquist criterion, one does not have to count encirclements to assess stability using the hard SRG. Additionally, for the MIMO Nyquist criterion, one has to re-compute the encirclements of $\det(I+L)$ \emph{every time} the gains are varied, whereas $\SRGe(L)$ can be re-used in calculations with different gains.

The GNC aimed to eliminate this problem of having to re-compute the winding number upon a change of gain. However, the GNC requires gluing the eigenloci by hand, which can behave numerically poor, especially when integrators are involved. Additionally, the GNC does not directly provide an $L_2$-gain bound, and no robustness margin.

We note, however, that Theorem~\ref{thm:hard_srg_bound_lti_circles} currently relies on a model for the LTI system. The (MIMO) Nyquist criterion is particularly relevant since one can replace the Nyquist curve by frequency response measurement data, hence allows for design and analysis using a plant non-parametric model. To truly compare the hard SRG with the (MIMO) Nyquist criterion, the radii $r_\alpha, R_\alpha$ in Theorem~\ref{thm:hard_srg_bound_lti_circles} should be computed from input/output or FRF data directly. In particular, one should be able to detect non-minimum phase zeros and unstable poles from input/output data. 

Note that for Proposition~\ref{prop:maxgain}, one only needs to know \emph{if} there are unstable poles, but not \emph{how many} there are, as opposed to the MIMO Nyquist criterion and GNC. The detection of a non-minimum phase zero from data, however, is known to be challenging~\cite{bolandPhaseResponseReconstruction2021}.

\begin{figure*}[t]
     \centering
     \begin{subfigure}[b]{0.32\linewidth}
         \centering
         \includegraphics[width=\linewidth]{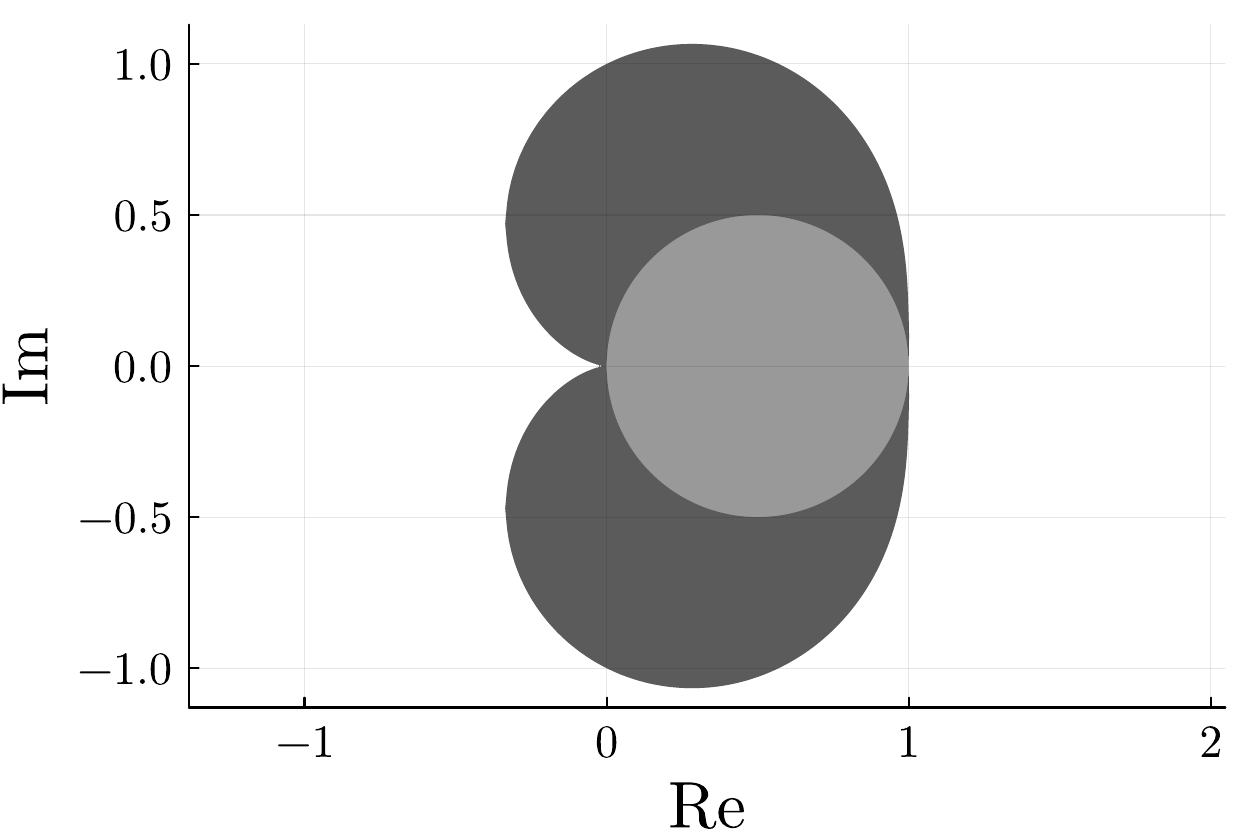}
         \caption{$\SRGe(G_1)$.}
         \label{fig:example_siso_1}
     \end{subfigure}
     \hfill
     \begin{subfigure}[b]{0.32\linewidth}
         \centering
         \includegraphics[width=\linewidth]{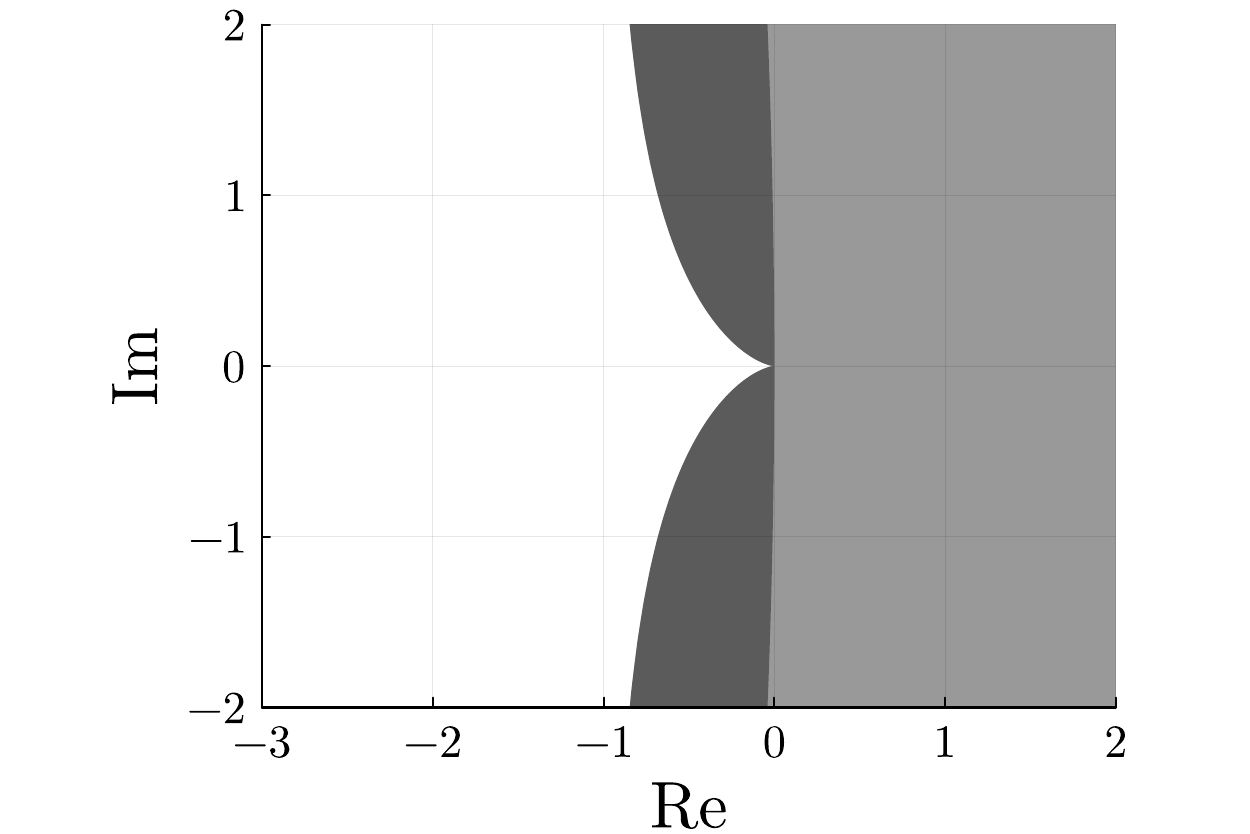}
         \caption{$\SRGe(G_2)$.}
         \label{fig:example_siso_2}
     \end{subfigure}
     \hfill
     \begin{subfigure}[b]{0.32\linewidth}
         \centering
         \includegraphics[width=\linewidth]{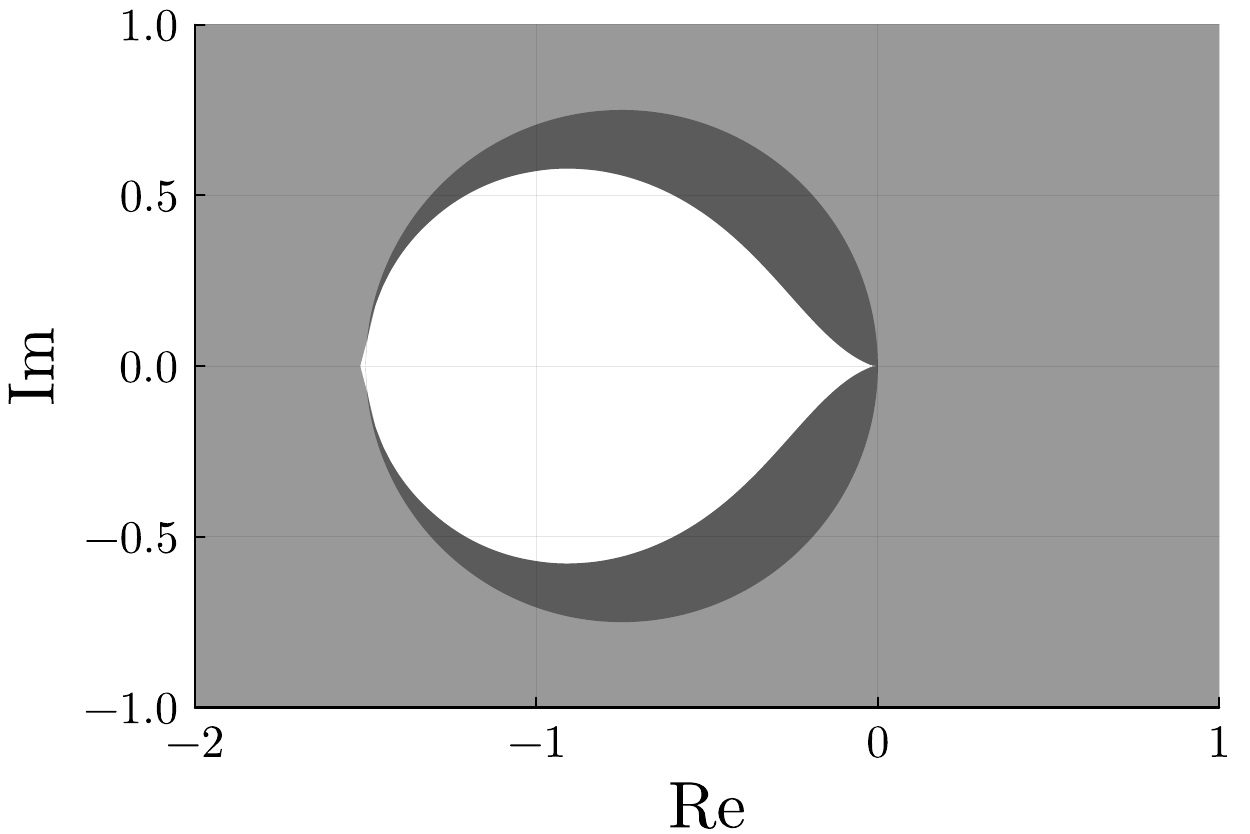}
         \caption{$\SRGe(G_3)$.}
         \label{fig:example_siso_3}
     \end{subfigure}
     \hfill
     \begin{subfigure}[b]{0.32\linewidth}
         \centering
         \includegraphics[width=\linewidth]{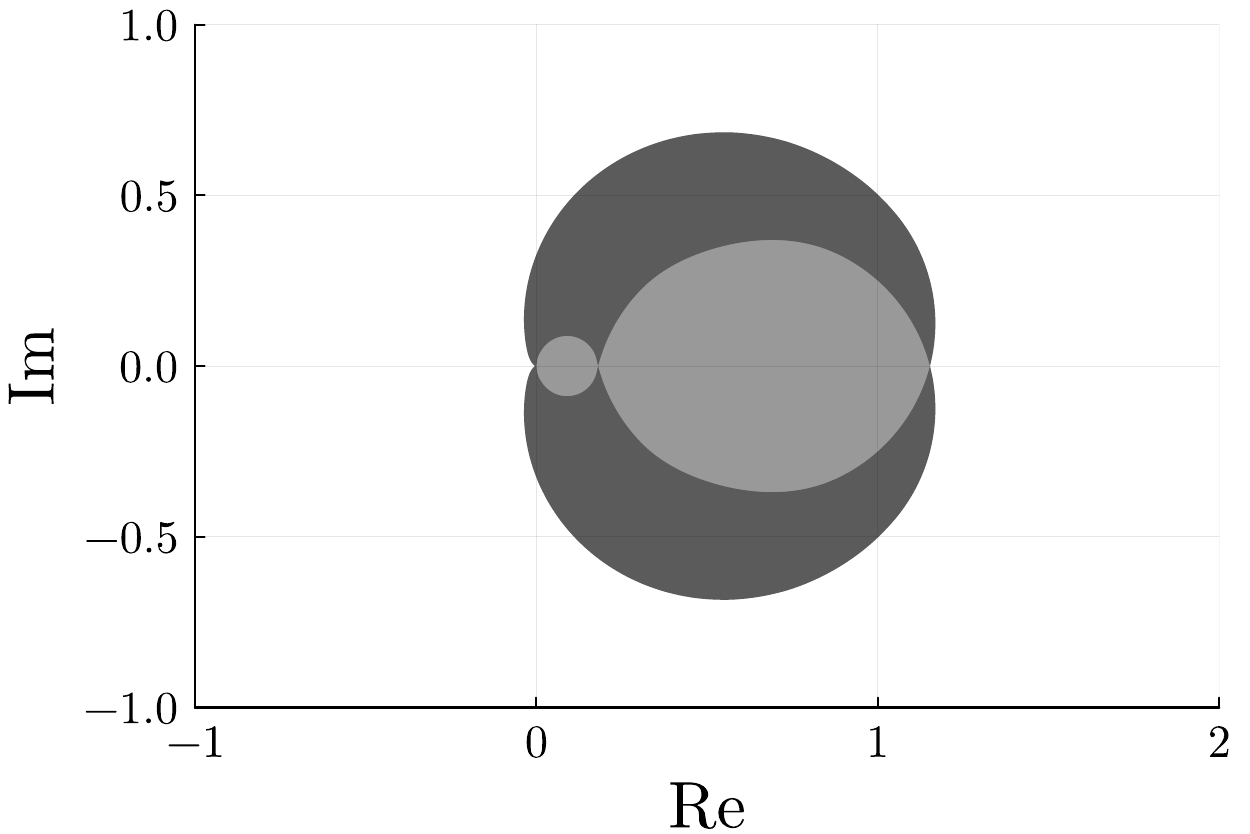}
         \caption{$\SRGe(G_4)$.}
         \label{fig:example_mimo_1}
     \end{subfigure}
     \hfill
     \begin{subfigure}[b]{0.32\linewidth}
         \centering
         \includegraphics[width=\linewidth]{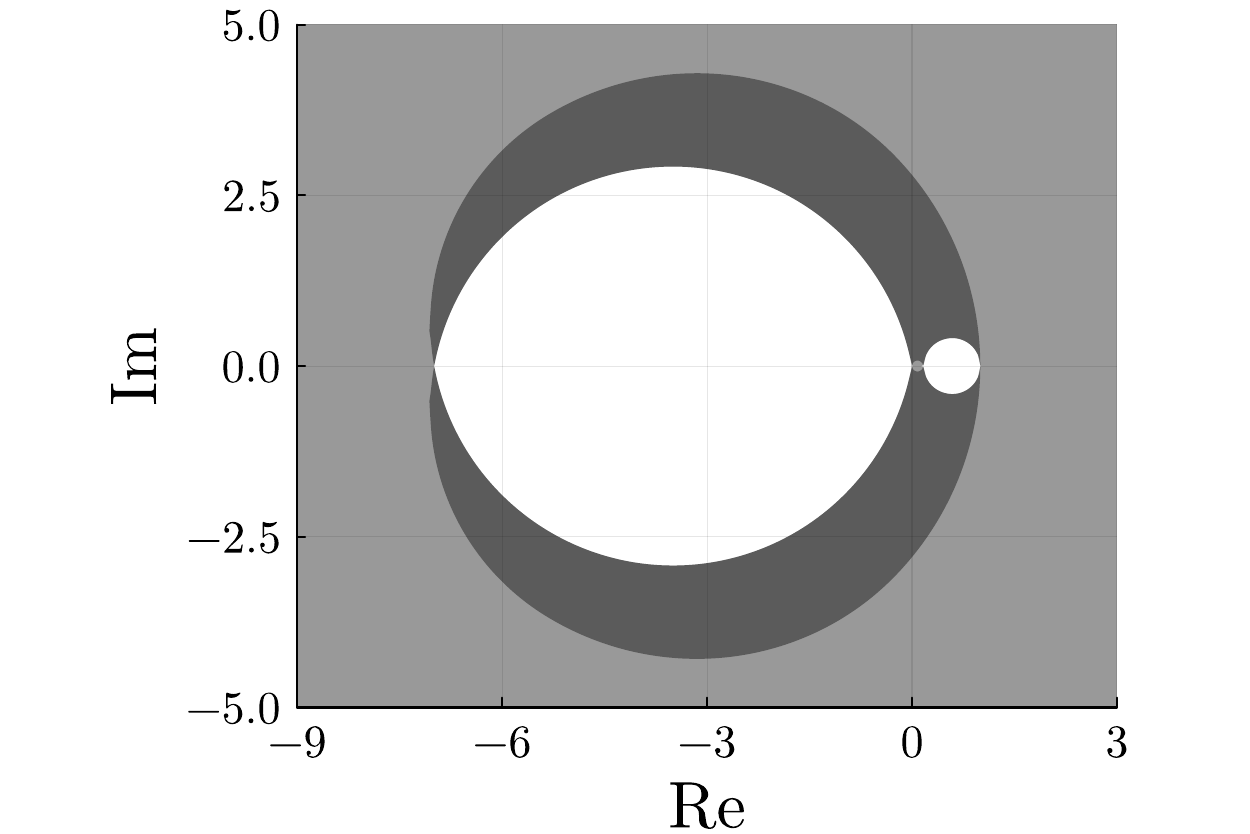}
         \caption{$\SRGe(G_5)$.}
         \label{fig:example_mimo_2}
     \end{subfigure}
     \hfill
     \begin{subfigure}[b]{0.32\linewidth}
         \centering
         \includegraphics[width=\linewidth]{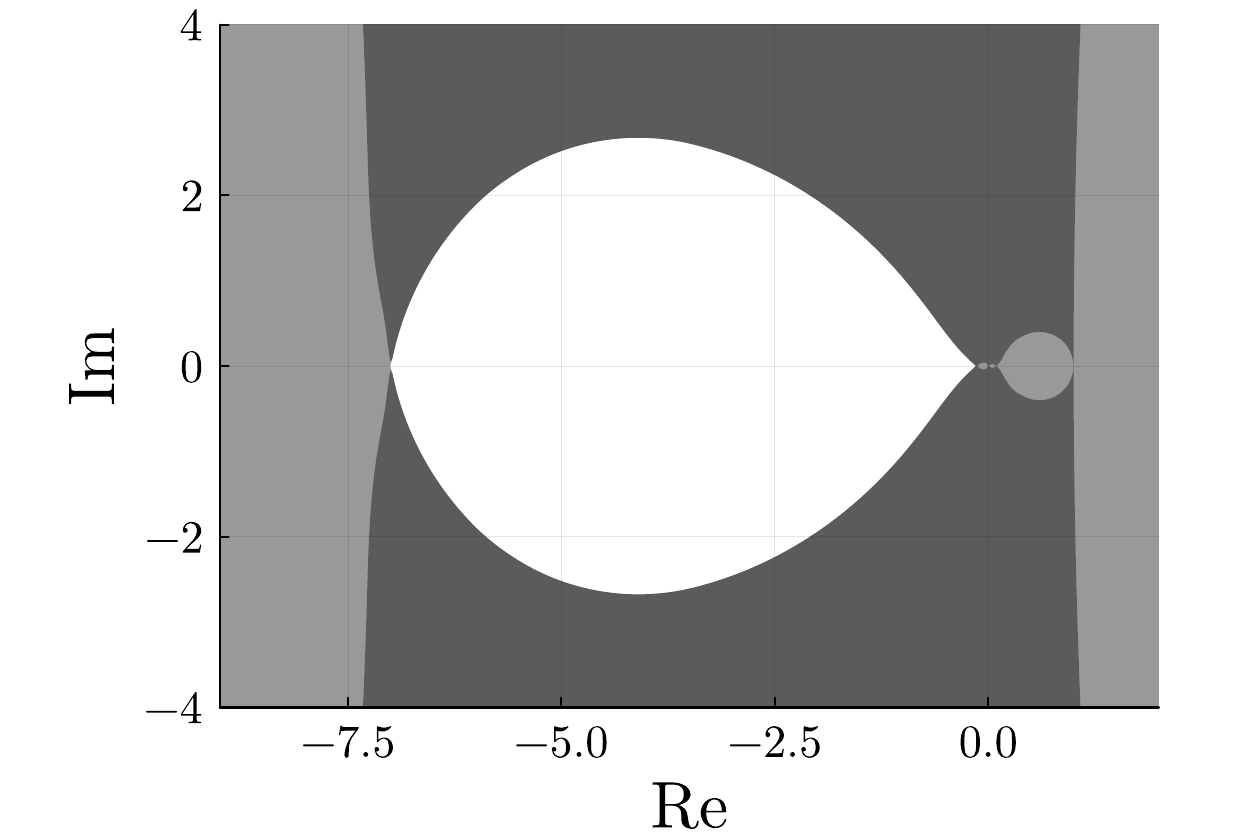}
         \caption{$\SRGe(G_6)$.}
         \label{fig:example_mimo_3}
     \end{subfigure}
    \caption{The hard SRG bound of the systems in Section~\ref{sec:examples} in gray, obtained from~\eqref{eq:hard_srg_G_circle_bound_thm}, where the soft SRG bound is depicted in dark gray and is always contained in the hard SRG bound. Note that for SISO systems, the hard/soft SRG bound is exact.}
    \label{fig:examples_siso}
\end{figure*}

\section{Examples}\label{sec:examples}

We will now demonstrate the use of Theorem~\ref{thm:hard_srg_bound_lti_circles} on various LTI systems. Additionally, we will discuss how the stability of the feedback system in Fig.~\ref{fig:linear_feedback} can be studied using the hard SRG bound of the loop transfer.

\subsection{SISO Examples}

Consider the following SISO transfer functions
\begin{equation*}
    G_1(s) = \frac{1}{s^2+s+1}, \quad G_2(s) = \frac{1}{s(s+1)}, 
\end{equation*}\vspace{-1em}
\begin{equation*}
    G_3(s) = \frac{3}{(s-2)(s/10+1)},
\end{equation*}
for which the hard SRG and soft SRG is shown in Figs.~\ref{fig:example_siso_1}, \ref{fig:example_siso_2}\footnote{The boundary of the soft SRG should be a straight line from the origin to $\pm j \infty$. Since a finite $\alpha$ is used in~\eqref{eq:soft_srg_lti_siso}, the slightly curved boundary is a numerical artifact.} and \ref{fig:example_siso_3}, respectively.

We note that the extended SRG of $-2G_1$ and $G_3$ were computed in~\cite{krebbekxScaledRelativeGraph2025a}, and are the same as the hard SRG in Figs.~\ref{fig:example_siso_1} and~\ref{fig:example_siso_3}. Observe that $\frac{k G_2}{1+k G_2} = \frac{k}{G_2^{-1}+k}$, and the roots of $G_2^{-1} + k = s^2+s+k$ are $s = -\frac{1}{2}(1 \pm \sqrt{1-4k})$, which obey $\mathrm{Re}(s) < 0$ if and only if $k>0$. This means that $\mathcal{N}_{G_2} \cap \R = [0, \infty)$, which shows that Fig.~\ref{fig:example_siso_2} is indeed equal to $\SRG'(G_2)$. 

Note that each of the hard SRGs are separated from $-1$, hence would yield a stable closed loop in Fig.~\ref{fig:linear_feedback}.

\subsection{MIMO Examples}

Consider the following MIMO LTI loop transfers in Fig.~\ref{fig:linear_feedback}
\begin{equation*}
    G_4(s) = \mat{\frac{1}{s+1} & \frac{1}{s+2} \\ \frac{1}{s+4} & \frac{1}{s+3}}, \quad G_5(s) = \mat{ \frac{s+7}{s-1} & \frac{s-5}{(s+2)^2} \\ \frac{1}{(s+4)^3} & \frac{s}{(s+3)^2}},
\end{equation*}
\begin{equation*}
    G_6(s) = \mat{  \frac{s+7}{s-1} & \frac{s-5}{(s+2)^2}  & \frac{1}{s+1.5} \\ \frac{1}{(s+4)^3} & \frac{s}{(s+3)^2} & \frac{1}{s+1.2} \\ \frac{2}{s+1}  & \frac{1}{s+9} & \frac{1}{s}}.
\end{equation*}
Note that $G_4$ is stable, $G_5$ is unstable and $G_6$ contains an integrator. In this case, their hard SRG \emph{bounds} with corresponding soft SRG \emph{bounds} are displayed in Figs.~\ref{fig:example_mimo_1}, \ref{fig:example_mimo_2} and~\ref{fig:example_mimo_3}\footnote{Similar to $G_2$, the boundary of the soft SRG on the left and right should go to $\pm j \infty$ in the $\mathrm{Im}$-direction. Instead, it bends off due to a finite $\alpha$ in~\eqref{eq:soft_srg_lti}.}. Note that in each case, again $-1$ is separated from the hard SRG, hence the system in Fig.~\ref{fig:linear_feedback} would be stable. 

The hard SRG of $G_5$ is particularly interesting since it has two holes, and will be used to show how the hard SRG provides an alternative to the MIMO Nyquist criterion. For $\tilde{G} = k_1+k_2 G_5$, we can visually choose $k_1,k_2$ such that $\SRGe(\tilde{G})$ is separated from $-1$. For instance, since $[-6,0) \nsubseteq \SRGe(G_5)$, we can pick $k_1=0,k_2=1/5$. Another choice is to flip and scale the hard SRG by choosing $k_1=0,k_2 =-2$, which flips the small hole on the right of the origin in Fig.~\ref{fig:example_mimo_2} onto $-1$. Finally, we can simply shift the hard SRG, for instance by $k_1=-1.5, k_2 = 1$, where now the small hole in Fig.~\ref{fig:example_mimo_2} is shifted to the left such that it contains $-1$. In each case~\eqref{eq:k1k2_criterion} holds, which guarantees stability of the closed-loop.

Aside from designing a stable closed loop, one can tune the gains $k_1,k_2$ in $\tilde{G} = k_1 + k_2 G_i$, where $i = 1,\dots, 6$, to achieve a desired separation $\dist(-1, \SRGe(\tilde{G})^{-1}) \geq r >0$ such that $\Gamma(R) \leq 1/r$. Note, however, that the \emph{inverse} of the hard SRG must be used to compute $R$. This is not explored here due to space constraints.

\section{Conclusion}\label{sec:conclusion}

LTI systems are the fundamental building block of nonlinear system analysis using SRGs. In this paper, we provide an algorithm to exactly compute the hard SRG of SISO and square MIMO LTI systems, which may be unstable. The computations are based on the transfer function representation, and has a direct relation to the soft SRG of the LTI system. Then, we proved that in the SISO case, the hard SRG is equal to the extended SRG, where the latter is a union of the soft SRG and the Nyquist stability criterion. 

While the main strength of the SRG framework lies in nonlinear system analysis, we showed how hard SRG analysis also provides an alternative to the MIMO Nyquist criterion and the GNC.

Even though we focus on LTI systems in this paper, the results immediately carry over to hard SRG analysis of nonlinear feedback systems as developed in~\cite{chenSoftHardScaled2025}. Additionally, we expect that the results in this paper will advance the SRG analysis of nonlinear MIMO systems, as developed in~\cite{krebbekxGraphicalAnalysisNonlinear2025}, to include unstable LTI plants in the loop.

\section*{Appendix: proofs}

\begin{proof}[Proof of Proposition~\ref{prop:stable_square_lti_soft_srg}]
    Since $G :L_2^p \to L_2^p$, we know from~\cite{patesScaledRelativeGraph2021} that $\SRG(G)$ is h-convex, and therefore (its closure) can be written as~\eqref{eq:h-convex-set-disk-representation}. It is immediate from Parseval's theorem that $r_\alpha = \underline{\sigma}(G_\alpha)$ and $R_\alpha = \overline{\sigma}(G_\alpha)$ are the largest and smallest values, respectively, such that $r_\alpha \leq \norm{G_\alpha u}/\norm{u} \leq R_\alpha$ holds. This implies that $\mathrm{cl} \, \SRG(G) = \mathcal{G}(\{\underline{\sigma}(G_\alpha)\}, \{\overline{\sigma}(G_\alpha)\})$, proving the result.
\end{proof}

\begin{proof}[Proof of Lemma~\ref{lemma:G_alpha_bound}]
    Since $\iota_{n \leftarrow p}L_2^p[0,T] = \mathcal{U}_{p,T}$ is a closed subspace of the Hilbert space $L_2^n[0,T]$ and $\mat{ I \\ 0_{(n-p) \times p}} u = u$ for all $u \in \mathcal{U}_{p,T}$, we can use the MIMO SRG interconnection rules~\cite[Thm. 2]{krebbekxGraphicalAnalysisNonlinear2025} to conclude 
    \begin{equation}\label{eq:G_alpha_srg_calc_step}
        \SRG(G_\alpha |_T) = \SRG(G |_T) - \alpha
    \end{equation}
    for all $T>0$, where $\SRG(\cdot)$ is the MIMO SRG as defined in~\cite{krebbekxGraphicalAnalysisNonlinear2025}. Then, since $\SRGe(G_\alpha) \subseteq \mathcal{C} \iff \SRG(G_\alpha |_T) \subseteq \mathcal{C}$ for all $T >0$, we can conclude using~\eqref{eq:G_alpha_srg_calc_step} that $\SRG(G|_T) \subseteq \alpha+\mathcal{C}$ for all $T>0$, which proves~\eqref{eq:G_alpha_bound}.  
\end{proof}

\begin{proof}[Proof of Proposition~\ref{prop:maxgain}]
    We consider the four cases proper/improper and stable/unstable separately.

    \emph{Stable and proper:} $G \in RH_\infty^{q \times p}$ implies $G_\alpha \in RH_\infty^{q \times p}$, so from Parseval's identity we know $\norm{G_\alpha u_T} \leq \overline{\sigma}(G_\alpha) \norm{u_T}$ for any $u \in \Lte^p$ and $T>0$. Since $G_\alpha$ is proper (i.e., causal), $(G_\alpha u_T)_T = (G_\alpha u)_T$ hence $\norm{(G_\alpha u)_T}/\norm{u_T} \leq \overline{\sigma}(G_\alpha)$. Since $\overline{\sigma}(G_\alpha) = \norm{G_\alpha}_\infty$, $R_\alpha = \overline{\sigma}(G_\alpha)$ is the smallest possible value of $R_\alpha$ that obeys \eqref{eq:r_R_alpha_defs}.
    
    \emph{Unstable and proper:} now $G(s)$ has an unstable pole (i.e., in the closed RHP), and so has $G_\alpha$. This means that there exists a $u \in L_2^p$ such that $Gu \in \Lte^p \setminus L_2^p$, hence $\lim_{T \to \infty} \norm{(Gu)_T} / \norm{u} = \infty$, so $R_\alpha = \infty$. %
    
    \emph{Improper:} in this case, $G_\alpha(s) = s^k \tilde{G}(s)$ for $k \in \N$ and $\tilde{G}(s)$ obeys $\lim_{|s| \to \infty} \tilde{G}(s) = c \neq 0$, i.e., $G_\alpha$ is a bi-proper system $\tilde{G}$ pre-multiplied by $k$ ideal derivatives. This means that as $\omega \to \infty$, the response to $u(t) = \sin(\omega t)$ grows as $\omega^k$ if $\tilde{G}$ is stable, i.e., $\lim_{\omega \to \infty} \lim_{T \to \infty} \norm{(G u)_T} / \norm{u_T} = \infty$,  and might even contain exponentially growing terms if $\tilde{G} \notin RH_\infty^{q \times p}$. Therefore, $R_\alpha = \infty$.  
\end{proof}

\begin{proof}[Proof of Proposition~\ref{prop:mingain_square}]
    We will invert $G_\alpha(s)$ and apply Proposition~\ref{prop:maxgain}. Note that we can only compute $G_\alpha^{-1}(s)$ if $G_\alpha(s)$ has full normal rank. If $G_\alpha(s)$ does not have full normal rank, then there exists a nonzero $U(s) \in \R^p(s)$ such that $G_\alpha(s) U(s) = 0$, which implies $r_\alpha = 0$. Note that $\underline{\sigma}(G_\alpha)=0$ when $G_\alpha(s)$ does not have full normal rank, hence $r_\alpha = \underline{\sigma}(G_\alpha)$.

    Suppose that $G^{-1}(s)$ exists, then $\underline{\sigma}(G_\alpha) = 1/\overline{\sigma}(G_\alpha^{-1})$. By the Smith-McMillan form, the poles of $G_\alpha(s)$ are the zeros of $G_\alpha^{-1}(s)$, and vice versa~\cite{hespanhaLinearSystemsTheory2009}. We use Proposition~\ref{prop:maxgain} to find an $R_\alpha$ such that $\norm{(G_\alpha^{-1} y )_T} / \norm{y_T} \leq R_\alpha$ $\norm{u_T} \leq \norm{u} = \norm{ G_\alpha^{-1} y_T} \leq R_\alpha \norm{y_T}$, hence $\norm{y_T} / \norm{u_T} \geq 1/R_\alpha = r_\alpha$.

    \emph{Minimum phase:} in this case, $G_\alpha^{-1}(s)$ is stable. If $G_\alpha^{-1}(s)$ is proper, then $G_\alpha^{-1} \in RH_\infty^{p \times p}$ and $r_\alpha = 1/R_\alpha = \underline{\sigma}(G_\alpha)$. If $G_\alpha^{-1}(s)$ is not proper, then $r_\alpha = 1/R_\alpha = 0$ and since $\overline{\sigma}(G) = \infty$ for improper systems, it holds that $r_\alpha =\underline{\sigma}(G_\alpha) = 0$. 

    \emph{Non-minimum phase:} in this case, $G_\alpha^{-1}(s)$ has closed RHP poles, and hence $r_\alpha = 1/R_\alpha = 0$.  
\end{proof}

\begin{proof}[Proof of Theorem~\ref{thm:hard_srg_bound_lti_circles}]
    By Propositions~\ref{prop:maxgain} and~\ref{prop:mingain_square}, we know that~\eqref{eq:r_R_alpha_defs} holds for all $\alpha \in \R$ and $T>0$. This implies that~\eqref{eq:hard_srg_G_circle_bound} must hold for all $\alpha \in \R$, and we can conclude that $\SRGe(G) \subseteq \cap_{\alpha \in \R} (  D_{R_\alpha}(\alpha)  \setminus  D_{r_\alpha}(\alpha) )$. 

    Now we prove equality of the latter. For each $T>0$, we define $r_\alpha^T$ ($R_\alpha^T$) as the largest (smallest) value such that
    \begin{equation}\label{eq:r_R_alpha_T_defs}
        0 \leq r_\alpha^T \leq \norm{(G_\alpha u)_T} / \norm{u_T} \leq R_\alpha^T \leq \infty.
    \end{equation}
    From~\eqref{eq:r_R_alpha_defs} it is clear that $\inf_T r_\alpha^T = r_\alpha$ and $\sup_T R_\alpha^T=R_\alpha$ for all $\alpha \in \R$. We know from~\eqref{eq:h-convex-set-disk-representation} and~\eqref{eq:hard_srg_def} that
    \begin{equation*}
        \SRGe(G) = \bigcup_{T>0} \Big( \bigcap_{\alpha \in \R} \big( D_{R_\alpha^T}(\alpha) \setminus D_{r_\alpha^T}(\alpha) \big) \Big),
    \end{equation*}
    since $G|_T : L_2^p[0,T] \to L_2^p[0,T]$ for all $T>0$, even if $G$ is unstable. Note that
    \begin{equation*}
        \bigcup_{T>0} \Big( \bigcap_{\alpha \in \R} \big( D_{R_\alpha^T}(\alpha) \setminus D_{r_\alpha^T}(\alpha) \big) \Big) = \bigcap_{\alpha \in \R} \big( D_{R_\alpha}(\alpha) \setminus D_{r_\alpha}(\alpha) \big),
    \end{equation*}
    since $D_{R_\alpha^T}(\alpha) \subseteq D_{R_
    \alpha}(\alpha)$ and $D_{r_\alpha}(\alpha) \subseteq D_{r_\alpha^T}(\alpha)$ for all $\alpha \in \R$ and $T>0$, which is identical to the right-hand side of~\eqref{eq:hard_srg_G_circle_bound_thm}, proving the theorem.
\end{proof}

For the proof of Theorem~\ref{thm:hard_and_extended_srg_equivalence}, we need to establish some geometric facts that will be used throughout the proof. 

Let $G(s)\in R(s)$ be a SISO transfer function, then define $\mathrm{Nyquist}(G) = \{G(j \omega) \mid \omega \in \R\}$, where $G(s)$. By~\eqref{eq:soft_srg_lti_siso}, the soft SRG obeys $\SRG(G) \cap \R = \mathrm{Nyquist}(G) \cap \R$, which consists of $N$ finitely many points, hence $(\mathrm{Nyquist}(G) \cap \R )^c = \bigcup_{i=1}^{N+1} I_i$ where $I_i \subset \R$, $I_i \cap I_j = \emptyset$ for $i \neq j$ and the set complement is taken in $\R$. Note that for each $I_i$, by the Nyquist criterion, the feedback system $\frac{kG}{1+kG}$ is either stable or unstable for all $-\frac{1}{k}\in I_i$. Note that stability of $\frac{kG}{1+kG}$ is equivalent to $G+\frac{1}{k}$ being minimum-phase.

Observe that Theorem~\ref{thm:hard_srg_bound_lti_circles} yields different $r_\alpha$ and $R_\alpha$ values than in the soft SRG bound~\eqref{eq:soft_srg_lti_siso} only when $G$ is unstable ($R_\alpha = \infty \geq \overline{\sigma}(G_\alpha)$), or when $G_\alpha$ is non-minimum phase ($r_\alpha = 0\leq \underline{\sigma}(G_\alpha)$). Hence, $\SRG(G) \subseteq \SRGe(G)$. To distinguish between~\eqref{thm:hard_srg_bound_lti_circles} and~\eqref{eq:hard_srg_G_circle_bound_thm}, we denote the hard SRG bound radii as $r_\alpha^\mathrm{e}$ and $R_\alpha^\mathrm{e}$. 

Finally, because of~\eqref{eq:soft_srg_lti_siso} all connected components (CCs) of $\SRG(G)^c$ are path-connected (PC) to $\R$. Because of this, the winding number $N_G(z)$ is the same for all $z$ in a CCs of $\SRG(G)^c$ since $\mathrm{Nyquist}(G) \subseteq \SRG(G)$.

\begin{proof}[Proof of Theorem~\ref{thm:hard_and_extended_srg_equivalence}]
    We consider two separate cases: either $G(s)$ has one or more integrators, or it has none. Our approach is to prove that $\SRGe(G)^c = \SRG'(G)^c$, where the complements are taken in $\C_\infty$. 

    \emph{Case 1: $G$ has integrators:} in this case, $\mathrm{Nyquist}(G)$ is unbounded, hence $R_\alpha = \infty$ for all $\alpha \in \R$ in~\eqref{eq:soft_srg_lti_siso}, which implies $\SRG(G)^c = \bigcup_{\alpha \in \R} D_{r_\alpha}(\alpha)$. Since $\SRG(G) \subseteq \SRGe(G)$, $R_\alpha = \infty$ for all $\alpha \in \R$ in~\eqref{eq:hard_srg_G_circle_bound_thm} as well. 

    Now, $z \in \SRGe(G)^c \iff z \in D_{r_\alpha^\mathrm{e}}(\alpha)$ for some $\alpha \in \R \iff z \in D_{r_\alpha}(\alpha)$ for some $\alpha \in \R$ and $G-\alpha$ is minimum-phase $\iff z \in D_{r_\alpha}(\alpha)$ for some $\alpha \in \R$ and $\frac{kG}{1+kG}$ is stable $\iff z \in \SRG(G)^c$ in a stable CC\footnote{I.e., where $N_G(\tilde{z})+n_\mathrm{p} = 0$ for all $\tilde{z}$ in that CC.} $\iff z \notin \SRG(G)$ and $z \notin \mathcal{N}_G$ (see~\eqref{eq:set_of_encircled_unstable_points}) $\iff z \in \SRG'(G)^c$.

    \emph{Case 2: $G$ has no integrators:} in this case, $\mathrm{Nyquist}(G)$ is bounded, hence $\SRG(G) \subseteq D_R(0)$ for some $R<\infty$.

    \enquote{$\Longrightarrow$}: Note $z \in \SRGe(G)^c \implies z \in D_{r_\alpha^\mathrm{e}}(\alpha)$ or $z \in (D_{R_\alpha^\mathrm{e}}(\alpha))^c$ for some $\alpha \in \R \implies z \in \SRG(G)^c$ and in a stable CC or $\{ z \in (D_{R_\alpha}(\alpha))^c$ for some $\alpha \in \R$ and $G$ is stable$\}$. From the latter statement, since $G$ is stable, there exists an $\epsilon>0$ such that for $k \in (-\epsilon, \epsilon)$, $\frac{kG}{1+kG}$ is stable. Since $\SRG(G) \subseteq D_{R_\alpha}(\alpha)$, $z \in (D_{R_\alpha}(\alpha))^c$ is PC (in $\SRG(G)^c$) to $(-\epsilon, \epsilon)$, hence $z \notin \mathcal{N}_G$, which together with $z \notin \SRG(G)$, implies $z \in \SRG'(G)^c$. 

    \enquote{$\Longleftarrow$}: Note $z \in \SRG'(G)^c \implies z \notin \SRG(G)$ and $z$ in a stable CC $\implies z \in D_{r_\alpha}(\alpha)$ or $z \in (D_{R_\alpha}(\alpha))^c$ for some $\alpha \in \R$ and $z$ in a stable CC $\implies \{z \in D_{r_\alpha^\mathrm{e}}(\alpha) \}$ or the following. Since the winding number $N_G(z)$ is constant on each CC of $\SRG(G)^c$, the fact that $\SRG(G) \subseteq D_{R_\alpha}(\alpha)$ and $z \in (D_{R_\alpha}(\alpha))^c$ implies that $\SRG'(G)$ bounded. The latter implies~\cite{krebbekxScaledRelativeGraph2025} that $G$ is stable. Since $G$ is stable, $R_\alpha = R_\alpha^\mathrm{e}$ for all $\alpha \in \R$. To conclude, we have shown $z \in \SRG'(G)^c \implies \{z \in D_{r_\alpha^\mathrm{e}}(\alpha)$ for some $\alpha \in \R \}$ or $\{z \in (D_{R_\alpha^\mathrm{e}}(\alpha))^c$ for some $\alpha \in \R\} \iff z \in \SRGe(G)^c$.
\end{proof}

\section*{ACKNOWLEDGMENT}

We thank Alejandro W. Sere for useful comments regarding the exactness of the soft SRG bound.

\footnotesize
\bibliographystyle{IEEEtran} 
\bibliography{bibliography} 

@article{baron-pradaDecentralizedStabilityConditions2025,
  title = {On {{Decentralized Stability Conditions}} Using {{Scaled Relative Graphs}}},
  author = {{Baron-Prada}, Eder and Anta, Adolfo and D{\"o}rfler, Florian},
  year = 2025,
  journal = {IEEE Control Systems Letters},
  pages = {1--1},
  issn = {2475-1456},
  doi = {10.1109/LCSYS.2025.3577230},
  urldate = {2025-06-16},
  copyright = {https://ieeexplore.ieee.org/Xplorehelp/downloads/license-information/IEEE.html},
  keywords = {Read},
  file = {/Users/jkrebbekx/Dropbox/Zotero files/PhD Literature/Scaled Relative Graph/baron-prada_et_al_2025_on_decentralized_stability_conditions_using_scaled_relative_graphs.pdf}
}

@article{bolandPhaseResponseReconstruction2021,
  title = {Phase Response Reconstruction for Non-minimum Phase Systems Using Frequency-domain Magnitude Values},
  author = {Boland, Taco and Naus, Rik and Zwamborn, Peter},
  year = 2021,
  month = oct,
  journal = {IET Science, Measurement \& Technology},
  volume = {15},
  number = {8},
  pages = {619--631},
  issn = {1751-8822, 1751-8830},
  doi = {10.1049/smt2.12063},
  urldate = {2025-11-03},
  abstract = {Abstract                            It is often more complicated to measure the phase response of a large system than the magnitude. In that case, one can attempt to use the Kramers--Kronig (KK) relations for magnitude and phase, which relates magnitude and phase analytically. The advantage is that then only the magnitude of the frequency response needs to be measured. It is shown that the KK relations for magnitude and phase may yield invalid results when the transfer function has zeros located in the right half of the complex               s               -plane, that is, the system is non-minimum phase. In this paper, a method which enables to determine these zeros is proposed, by using specific excitation signals and measuring the resulting time responses of the system. The method is verified using blind tests among the authors. When the locations of the zeros in the right half of the complex               s               -plane are known, modified KK relations can be successfully applied to non-minimum phase systems. This is demonstrated by computing the phase response of the electric field, excited by a point dipole source inside a closed cavity with Perfect Electrically Conducting (PEC) walls. Also, the effects of simulated measurement noise are considered for this example.},
  langid = {english}
}

@article{chaffeyGraphicalNonlinearSystem2023,
  title = {Graphical {{Nonlinear System Analysis}}},
  author = {Chaffey, Thomas and Forni, Fulvio and Sepulchre, Rodolphe},
  year = 2023,
  month = oct,
  journal = {IEEE Transactions on Automatic Control},
  volume = {68},
  number = {10},
  pages = {6067--6081},
  issn = {1558-2523},
  doi = {10.1109/TAC.2023.3234016},
  urldate = {2024-03-31},
  abstract = {We use the recently introduced concept of a scaled relative graph (SRG) to develop a graphical analysis of input--output properties of feedback systems. The SRG of a nonlinear operator generalizes the Nyquist diagram of an LTI system. In the spirit of classical control theory, important robustness indicators of nonlinear feedback systems are measured as distances between SRGs.},
  keywords = {Convergence,Graphical analysis,Hilbert space,incremental input/ output stability,Linear systems,nonlinear systems,Nonlinear systems,Numerical stability,Nyquist criterion,Read,Resistors,Stability criteria},
  file = {/Users/jkrebbekx/Dropbox/Zotero files/PhD Literature/Scaled Relative Graph/chaffey_et_al_2023_graphical_nonlinear_system_analysis.pdf;/Users/jkrebbekx/Dropbox/Zotero files/PhD Literature/Scaled Relative Graph/chaffey_et_al_2023_graphical_nonlinear_system_analysis2.pdf;/Users/jkrebbekx/Zotero/storage/L4N23HVM/10005799.html}
}

@misc{chenSoftHardScaled2025,
  title = {Soft and {{Hard Scaled Relative Graphs}} for {{Nonlinear Feedback Stability}}},
  author = {Chen, Chao},
  year = 2025,
  keywords = {Read},
  file = {/Users/jkrebbekx/Dropbox/Zotero files/PhD Literature/Papers/Whitepaper/chen_2025_soft_and_hard_scaled_relative_graphs_for_nonlinear_feedback_stability.pdf}
}

@book{desoerFeedbackSystemsInputoutput1975,
  title = {Feedback Systems: Input-Output Properties},
  shorttitle = {Feedback Systems},
  author = {Desoer, Charles A. and Vidyasagar, Mathukumalli},
  year = 1975,
  series = {Electrical Science Series},
  publisher = {Acad. Press},
  address = {New York},
  isbn = {978-0-12-212050-3},
  langid = {english},
  keywords = {Read},
  file = {/Users/jkrebbekx/Dropbox/Zotero files/PhD Literature/Books/desoer_vidyasagar_1975_feedback_systems2.pdf}
}

@article{desoerGeneralizedNyquistStability1980,
  title = {On the Generalized Nyquist Stability Criterion},
  author = {Desoer, C. and {Yung-Terng Wang}},
  year = 1980,
  month = apr,
  journal = {IEEE Transactions on Automatic Control},
  volume = {25},
  number = {2},
  pages = {187--196},
  issn = {0018-9286},
  doi = {10.1109/TAC.1980.1102280},
  urldate = {2024-06-03},
  copyright = {https://ieeexplore.ieee.org/Xplorehelp/downloads/license-information/IEEE.html},
  langid = {english},
  file = {/Users/jkrebbekx/Dropbox/Zotero files/PhD Literature/Nyquist and Graphical Methods/desoer_yung-terng_wang_1980_on_the_generalized_nyquist_stability_criterion.pdf}
}

@article{eijndenPhaseScaledGraphs2025,
  title = {On Phase in Scaled Graphs},
  author = {van den Eijnden, Sebastiaan and Chen, Chao and Scheres, Koen and Chaffey, Thomas and Lanzon, Alexander},
  year = 2025,
  month = may,
  journal = {arXiv:2504.21448},
  eprint = {2504.21448},
  primaryclass = {eess},
  doi = {10.48550/arXiv.2504.21448},
  urldate = {2025-05-14},
  abstract = {The scaled graph has been introduced recently as a nonlinear extension of the classical Nyquist plot for linear time-invariant systems. In this paper, we introduce a modified definition for the scaled graph, termed the signed scaled graph (SSG), in which the phase component is characterized by making use of the Hilbert transform. Whereas the original definition of the scaled graph uses unsigned phase angles, the new definition has signed phase angles which ensures the possibility to differentiate between phase-lead and phase-lag properties in a system. Making such distinction is important from both an analysis and a synthesis perspective, and helps in providing tighter stability estimates of feedback interconnections. We show how the proposed SSG leads to intuitive characterizations of positive real and negative imaginary nonlinear systems, and present various interconnection results. We showcase the effectiveness of our results through several motivating examples.},
  archiveprefix = {arXiv},
  keywords = {Computer Science - Systems and Control,Electrical Engineering and Systems Science - Systems and Control,Read},
  file = {/Users/jkrebbekx/Dropbox/Zotero files/PhD Literature/Scaled Relative Graph/eijnden_et_al_2025_on_phase_in_scaled_graphs.pdf;/Users/jkrebbekx/Zotero/storage/85DN2GDE/2504.html}
}

@article{grootDissipativityFrameworkConstructing2025,
  title = {A {{Dissipativity Framework}} for {{Constructing Scaled Graphs}}},
  author = {de Groot, Timo and {heemels}, Maurice and van den Eijnden, Sebastiaan},
  year = 2025,
  month = jul,
  journal = {arXiv.2507.08411},
  eprint = {2507.08411},
  primaryclass = {math},
  doi = {10.48550/arXiv.2507.08411},
  urldate = {2025-07-16},
  abstract = {Scaled relative graphs have been originally introduced in the context of convex optimization and have recently gained attention in the control systems community for the graphical analysis of nonlinear systems. Of particular interest in stability analysis of feedback systems is the scaled graph, a special case of the scaled relative graph. In many ways, scaled graphs can be seen as a generalization of the classical Nyquist plot for linear time-invariant systems, and facilitate a powerful graphical tool for analyzing nonlinear feedback systems. In their current formulation, however, scaled graphs require characterizing the input-output behaviour of a system for an uncountable number of inputs. This poses a practical bottleneck in obtaining the scaled graph of a nonlinear system, and currently limits its use. This paper presents a framework grounded in dissipativity for efficiently computing the scaled graph of several important classes of systems, including multivariable linear time-invariant systems, impulsive systems, and piecewise linear systems. The proposed approach leverages novel connections between linear matrix inequalities, integral quadratic constraints, and scaled graphs, and is shown to be exact for specific linear time-invariant systems. The results are accompanied by several examples illustrating the potential and effectiveness of the presented framework.},
  archiveprefix = {arXiv},
  keywords = {Computer Science - Systems and Control,Electrical Engineering and Systems Science - Systems and Control,Mathematics - Optimization and Control},
  file = {/Users/jkrebbekx/Dropbox/Zotero files/PhD Literature/Scaled Relative Graph/groot_et_al_2025_a_dissipativity_framework_for_constructing_scaled_graphs.pdf;/Users/jkrebbekx/Zotero/storage/MJCPX65M/2507.html}
}

@book{hespanhaLinearSystemsTheory2009,
  title = {Linear Systems Theory},
  author = {Hespanha, Jo{\~a}o P.},
  year = 2009,
  publisher = {Princeton University Press},
  address = {Princeton, N.J. Woodstock},
  isbn = {978-0-691-14021-6},
  langid = {english},
  file = {/Users/jkrebbekx/Dropbox/Zotero files/Control Theory Background/hespanha_2009_linear_systems_theory.pdf}
}

@article{krebbekxGraphicalAnalysisNonlinear2025,
  title = {Graphical {{Analysis}} of {{Nonlinear Multivariable Feedback Systems}}},
  author = {Krebbekx, Julius P. J. and T{\'o}th, Roland and Das, Amritam},
  year = 2025,
  month = jul,
  journal = {arXiv.2507.16513},
  eprint = {2507.16513},
  primaryclass = {eess},
  doi = {10.48550/arXiv.2507.16513},
  urldate = {2025-08-19},
  abstract = {Scaled Relative Graphs (SRGs) provide a novel graphical frequency-domain method for the analysis of nonlinear systems. There have been recent efforts to generalize SRG analysis to Multiple-Input Multiple-Output (MIMO) systems. However, these attempts yielded only results for square systems, and in some cases, only methods applicable for Linear Time-Invariant (LTI) systems. In this paper, we develop a complete SRG framework for the analysis of MIMO systems, which may be nonlinear and non-square. The key element is the embedding of operators to a space of operators acting on a common Hilbert space, while restricting the input space to the original input dimension. We develop interconnection rules that use restricted input spaces and stability theorems to guarantee causality, well-posedness and (incremental) \$L\_2\$-gain bounds for the overall interconnection. We show utilization of the proposed theoretical concepts on the analysis of nonlinear systems in a linear fractional representation form, which is a rather general class of systems with a representation form directly utilizable for control. Moreover, we provide formulas for the computation of MIMO SRGs of stable LTI operators and diagonal static nonlinear operators. Finally, we demonstrate the capabilities of our proposed approach on several examples.},
  archiveprefix = {arXiv},
  keywords = {Computer Science - Systems and Control,Electrical Engineering and Systems Science - Systems and Control,Mathematics - Optimization and Control,Read},
  file = {/Users/jkrebbekx/Zotero/storage/XUEAVZGZ/2507.html}
}

@article{krebbekxNonlinearBandwidthBode2025,
  title = {Nonlinear {{Bandwidth}} and {{Bode Diagrams}} Based on {{Scaled Relative Graphs}}},
  author = {Krebbekx, Julius P. J. and T{\'o}th, Roland and Das, Amritam},
  year = 2025,
  month = apr,
  journal = {arXiv:2504.01585},
  eprint = {2504.01585},
  doi = {10.48550/ARXIV.2504.01585},
  urldate = {2025-04-07},
  abstract = {Scaled Relative Graphs (SRGs) provide a novel graphical frequency domain method for the analysis of nonlinear systems. In this paper, we use the restriction of the SRG to particular input spaces to compute frequency-dependent gain bounds for incrementally stable nonlinear systems. This leads to a nonlinear (NL) generalization of the Bode diagram, where the sinusoidal, harmonic, and subharmonic inputs are considered separately. When applied to the analysis of the NL loop transfer and sensitivity, we define a notion of bandwidth for both the open-loop and closed-loop, compatible with the LTI definitions. We illustrate the power of our method on the analysis of a DC motor with a parasitic nonlinearity, verifying our results in simulations.},
  archiveprefix = {arXiv},
  copyright = {Creative Commons Attribution 4.0 International},
  keywords = {FOS: Electrical engineering electronic engineering information engineering,FOS: Mathematics,Optimization and Control (math.OC),Read,Systems and Control (eess.SY)},
  file = {/Users/jkrebbekx/Dropbox/Zotero files/My Papers/krebbekx_et_al_2025_nonlinear_bandwidth_and_bode_diagrams_based_on_scaled_relative_graphs.pdf}
}

@article{krebbekxResetControllerAnalysis2025,
  title = {Reset {{Controller Analysis}} and {{Design}} for {{Unstable Linear Plants}} Using {{Scaled Relative Graphs}}},
  author = {Krebbekx, Julius P. J. and T{\'o}th, Roland and Das, Amritam},
  year = 2025,
  month = jun,
  journal = {arXiv.2506.13518},
  eprint = {2506.13518},
  primaryclass = {eess},
  doi = {10.48550/arXiv.2506.13518},
  urldate = {2025-06-30},
  abstract = {In this technical communique, we develop a graphical design procedure for reset controllers for unstable LTI plants based on recent developments on Scaled Relative Graph analysis, yielding an \$L\_2\$-gain performance bound. The stabilizing controller consists of a second order reset element in parallel with a proportional gain. The proposed method goes beyond existing approaches that are limited to stable systems only, providing a well-applicable approach to design problems in practice where the plant is unstable.},
  archiveprefix = {arXiv},
  keywords = {Computer Science - Systems and Control,Electrical Engineering and Systems Science - Systems and Control,Mathematics - Optimization and Control,Read},
  file = {/Users/jkrebbekx/Zotero/storage/RUMRRB84/2506.html}
}

@article{krebbekxScaledRelativeGraph2025,
  title = {Scaled {{Relative Graph Analysis}} of {{General Interconnections}} of {{SISO Nonlinear Systems}}},
  author = {Krebbekx, Julius P. J. and T{\'o}th, Roland and Das, Amritam},
  year = 2025,
  month = jul,
  journal = {arXiv.2507.15564},
  eprint = {2507.15564},
  primaryclass = {eess},
  doi = {10.48550/arXiv.2507.15564},
  urldate = {2025-07-22},
  abstract = {Scaled Relative Graphs (SRGs) provide a novel graphical frequency-domain method for the analysis of nonlinear systems. However, we show that the current SRG analysis suffers from a pitfall that limits its applicability in analyzing practical nonlinear systems. We overcome this pitfall by introducing a novel reformulation of the SRG of a linear time-invariant operator and combining the SRG with the Nyquist criterion. The result is a theorem that can be used to assess stability and \$L\_2\$-gain performance for general interconnections of nonlinear dynamic systems. We provide practical calculation results for canonical interconnections and apply our result to Lur'e systems to obtain a generalization of the celebrated circle criterion, which deals with broader class of nonlinearities, and we derive (incremental) \$L\_2\$-gain performance bounds. We illustrate the power of the new approach on the analysis of several examples.},
  archiveprefix = {arXiv},
  keywords = {Computer Science - Systems and Control,Electrical Engineering and Systems Science - Systems and Control,Mathematics - Optimization and Control,Read}
}

@inproceedings{krebbekxScaledRelativeGraph2025a,
  title = {Scaled {{Relative Graph Analysis}} of {{Lur}}'e {{Systems}} and the {{Generalized Circle Criterion}}},
  booktitle = {2025 {{European Control Conference}} ({{ECC}})},
  author = {Krebbekx, Julius P. J. and T{\'o}th, Roland and Das, Amritam},
  year = 2025,
  month = jun,
  pages = {1213--1218},
  publisher = {IEEE},
  address = {Thessaloniki, Greece},
  doi = {10.23919/ECC65951.2025.11187068},
  urldate = {2025-10-20},
  copyright = {https://doi.org/10.15223/policy-029},
  isbn = {978-3-907144-12-1},
  keywords = {Read}
}

@article{macfarlaneGeneralizedNyquistStability1977,
  title = {The Generalized {{Nyquist}} Stability Criterion and Multivariable Root Loci},
  author = {MacFarlane, A. G. J. and Postlethwaite, I.},
  year = 1977,
  month = jan,
  journal = {International Journal of Control},
  volume = {25},
  number = {1},
  pages = {81--127},
  issn = {0020-7179, 1366-5820},
  doi = {10.1080/00207177708922217},
  urldate = {2024-06-03},
  langid = {english},
  file = {/Users/jkrebbekx/Dropbox/Zotero files/PhD Literature/Nyquist and Graphical Methods/macfarlane_postlethwaite_1977_the_generalized_nyquist_stability_criterion_and_multivariable_root_loci.pdf}
}

@article{megretskiSystemAnalysisIntegral1997,
  title = {System Analysis via Integral Quadratic Constraints},
  author = {Megretski, A. and Rantzer, A.},
  year = 1997,
  month = jun,
  journal = {IEEE Transactions on Automatic Control},
  volume = {42},
  number = {6},
  pages = {819--830},
  issn = {00189286},
  doi = {10.1109/9.587335},
  urldate = {2024-05-06},
  copyright = {https://ieeexplore.ieee.org/Xplorehelp/downloads/license-information/IEEE.html},
  keywords = {Read},
  file = {/Users/jkrebbekx/Dropbox/Zotero files/PhD Literature/Nonlinear Control/megretski_rantzer_1997_system_analysis_via_integral_quadratic_constraints.pdf}
}

@article{nyquistRegenerationTheory1932,
  title = {Regeneration {{Theory}}},
  author = {Nyquist, H.},
  year = 1932,
  month = jan,
  journal = {Bell System Technical Journal},
  volume = {11},
  number = {1},
  pages = {126--147},
  issn = {00058580},
  doi = {10.1002/j.1538-7305.1932.tb02344.x},
  urldate = {2024-11-07},
  langid = {english}
}

@article{patesScaledRelativeGraph2021,
  title = {The {{Scaled Relative Graph}} of a {{Linear Operator}}},
  author = {Pates, Richard},
  year = 2021,
  month = aug,
  journal = {arXiv:2106.05650},
  eprint = {2106.05650},
  primaryclass = {math},
  urldate = {2024-04-17},
  abstract = {The scaled relative graph (SRG) of an operator is a subset of the complex plane. It captures several salient features of an operator, such as contractiveness, and can be used to reveal the geometric nature of many of the inequality based arguments used in the convergence analyses of fixed point iterations. In this paper we show that the SRG of a linear operator can be determined from the numerical range of a closely related linear operator. Furthermore we demonstrate that the SRG of a linear operator has a range of spectral and convexity properties, and satisfies an analogue of Hildebrant's theorem.},
  archiveprefix = {arXiv},
  langid = {english},
  keywords = {Mathematics - Optimization and Control,Primary 47A11 47A12 Secondary 51M15,Read},
  file = {/Users/jkrebbekx/Dropbox/Zotero files/PhD Literature/Scaled Relative Graph/pates_2021_the_scaled_relative_graph_of_a_linear_operator.pdf}
}

@article{ryuScaledRelativeGraphs2022,
  title = {Scaled Relative Graphs: Nonexpansive Operators via {{2D Euclidean}} Geometry},
  shorttitle = {Scaled Relative Graphs},
  author = {Ryu, Ernest K. and Hannah, Robert and Yin, Wotao},
  year = 2022,
  month = jul,
  journal = {Mathematical Programming},
  volume = {194},
  number = {1-2},
  pages = {569--619},
  issn = {0025-5610, 1436-4646},
  doi = {10.1007/s10107-021-01639-w},
  urldate = {2024-04-17},
  abstract = {Many iterative methods in applied mathematics can be thought of as fixed-point iterations, and such algorithms are usually analyzed analytically, with inequalities. In this paper, we present a geometric approach to analyzing contractive and nonexpansive fixed point iterations with a new tool called the scaled relative graph. The SRG provides a correspondence between nonlinear operators and subsets of the 2D plane. Under this framework, a geometric argument in the 2D plane becomes a rigorous proof of convergence.},
  langid = {english},
  keywords = {Read},
  file = {/Users/jkrebbekx/Dropbox/Zotero files/PhD Literature/Scaled Relative Graph/ryu_et_al_2022_scaled_relative_graphs.pdf}
}

@article{takedaInstabilityFeedbackSystems1973,
  title = {Instability of Feedback Systems by Orthogonal Decomposition of {{L2}}},
  author = {Takeda, S. and Bergen, A.},
  year = 1973,
  month = dec,
  journal = {IEEE Transactions on Automatic Control},
  volume = {18},
  number = {6},
  pages = {631--636},
  issn = {0018-9286},
  doi = {10.1109/TAC.1973.1100449},
  urldate = {2025-09-18},
  copyright = {https://ieeexplore.ieee.org/Xplorehelp/downloads/license-information/IEEE.html},
  langid = {english},
  keywords = {Read},
  file = {/Users/jkrebbekx/Dropbox/Zotero files/Instability_of_feedback_systems_by_orthogonal_decomposition_ofL2.pdf}
}

@book{vanderschaftL2GainPassivityTechniques2017,
  title = {L2-{{Gain}} and {{Passivity Techniques}} in {{Nonlinear Control}}},
  author = {Van Der Schaft, Arjan},
  year = 2017,
  series = {Communications and {{Control Engineering}}},
  publisher = {Springer International Publishing},
  address = {Cham},
  doi = {10.1007/978-3-319-49992-5},
  urldate = {2024-10-31},
  copyright = {http://www.springer.com/tdm},
  isbn = {978-3-319-49991-8 978-3-319-49992-5},
  keywords = {Partially read},
  file = {/Users/jkrebbekx/Dropbox/Zotero files/PhD Literature/Books/van_der_schaft_2017_l2-gain_and_passivity_techniques_in_nonlinear_control.pdf}
}

@book{zhouRobustOptimalControl1996,
  title = {Robust and Optimal Control},
  author = {Zhou, Kemin and Doyle, John C. and Glover, Keith and Doyle, John Comstock},
  year = 1996,
  publisher = {Prentice Hall},
  address = {Upper Saddle River, NJ},
  isbn = {978-0-13-456567-5},
  langid = {english},
  keywords = {Partially read},
  file = {/Users/jkrebbekx/Dropbox/Zotero files/PhD Literature/Books/zhou_et_al_1996_robust_and_optimal_control.pdf}
}

\end{document}